\newtheorem{theorem}{Theorem}[section]
\newtheorem{lemma}[theorem]{Lemma}
\newtheorem{remark}[theorem]{Remark}
\newtheorem{definition}[theorem]{Definition}
\newtheorem{example}[theorem]{Example}
\numberwithin{equation}{section}
\begin{document}

\title{\Large\bf Signal Recovery under Cumulative Coherence}

\author{Peng Li\,$^\dag$ and Wengu Chen
\thanks{Corresponding author}
\thanks{P. Li is with  Graduate School, China Academy of Engineering Physics, Beijing 100088, China (E-mail: lipengmath@126.com)}
\thanks{W. Chen is with Institute of Applied Physics and Computational Mathematics, Beijing 100088, China (E-mail: chenwg@iapcm.ac.cn)}
}
\date{ }

\maketitle
\textbf{Abstract}~~This paper considers signal recovery in the framework of cumulative coherence. First, we show that the Lasso estimator and the Dantzig selector exhibit similar behavior under the cumulative coherence. Then we estimate the approximation equivalence between the Lasso and the Dantzig selector by calculating prediction loss difference under the condition of cumulative coherence.  And we also prove that the cumulative coherence implies the restricted eigenvalue condition. Last,
we illustrate the advantages of cumulative coherence condition for three class matrices, in terms of the recovery performance of sparse signals via extensive numerical experiments.

\textbf{Keywords}~~Cumulative coherence $\cdot$ Dantzig selector $\cdot$ Lasso $\cdot$ Oracle inequality $\cdot$ Restricted eigenvalue condition $\cdot$ Closeness of prediction loss

\textbf{Mathematics Subject Classification}~~{62G05 $\cdot$ 94A12}

%%%%%%%%%%%%%%%%%%%%%%%%%%%%%%%%%%%%%%%%%%%%%%%%%%%%%%%%%%%%%%%%%%%%%%%%%%%%%%%%%%%%%%%%%%%%%%%%

%%%%%%%%%%%%%%%%%%%%%%%%%%%%%%%%%%%%%% Section 1 %%%%%%%%%%%%%%%%%%%%%%%%%%%%%%%%%%%%%%%%%%%%%%%

%%%%%%%%%%%%%%%%%%%%%%%%%%%%%%%%%%%%%%%%%%%%%%%%%%%%%%%%%%%%%%%%%%%%%%%%%%%%%%%%%%%%%%%%%%%%%%%%
\section{Introduction \label{s1}}
\hskip\parindent

Compressed sensing predicts that sparse signals can be reconstructed from what was previously believed to be incomplete information.  Since Cand\`{e}s, Romberg and Tao's seminal works \cite{CRT2006,CRT2006-1} and Donoho's ground-breaking work \cite{D2006},  this new field has triggered a large research in mathematics, engineering and medical image. In such contexts, we often require to recover an unknown signal $x\in\mathbb{R}^n$ from an underdetermined system of linear equations
\begin{align}\label{systemequationsnoise}
b=Ax+z,
\end{align}
where $b\in\mathbb{R}^m$ are available measurements, the matrix $A\in\mathbb{R}^{m\times n}~(m<n)$ models the linear measurement process and
$z\in\mathbb{R}^m$ is a vector of measurement errors.

For the reconstruction of $x$, the most intuitive approach is to find the sparsest signal in the feasible set of possible solutions, which leads to an $\ell_0$-minimization problem as follows
$$
\min_{x\in\mathbb{R}^n}\|x\|_0~~\text{subject~ to}~~b-Ax\in\mathcal{B},
$$
where $\|x\|_0$ denotes the $\ell_0$ norm of $x$, i.e., the number of nonzero coordinates, and $\mathcal{B}$ is a bounded set determined by the error structure. However, such method is NP-hard and thus computationally infeasible in high dimensional sets.  Cand\`{e}s and Tao \cite{CT2005} proposed a convex relaxation of this method-the constrained $\ell_1$ minimization method. It estimates the signal $x$ by
\begin{align}\label{BPmodel}
\min_{x\in\mathbb{R}^n}~\|x\|_1~~\text{subject~ to}~~Ax=b,
\end{align}
which is also called basis pursuit (BP) \cite{CDS1998}.

When $z\neq 0$, i.e., there exist noises, we often consider two cases. One is $l_2$ bounded noises \cite{DET2006}, i.e.,
\begin{align}\label{QCBPmodel}
\min_{x\in\mathbb{R}^n}~\|x\|_1 ~~\text{subject~ to}~~\|b-Ax\|_{2}\leq\eta
\end{align}
for some constant $\eta>0$, which is called quadratically constrained basis pursuit (QCBP). And the other is motivated by \textit{Dantzig~selector} procedure \cite{CT2007}, i.e.,
\begin{align}\label{DantzigselectorMedel}
\min_{x\in\mathbb{R}^n}~\|x\|_1 ~~\text{subject~ to}~~\|A^*(b-Ax)\|_{\infty}\leq\eta.
\end{align}

For the $\ell_1$-minimization problem (\ref{BPmodel}) or (\ref{QCBPmodel})-(\ref{DantzigselectorMedel}), there are many works under null space property (NSP) introduced by Donoho and Elad \cite{DE2003}. We say that the measurement matrix $A$  satisfies the NSP, if there exists a constant $0<\tau<1$ such that
$$
\|x_{\max(s)}\|_1\leq\tau\|x_{-\max(s)}\|_1,
$$
where and in what follows, $x_{\max(s)}$ is the vector $x$ with all but the largest $s$ entries in absolute value set to zero, and $x_{-\max(s)}=x-x_{\max(s)}$.  There are many works under NSP, readers can refer to \cite{DE2003, CDD2009, S2011, FR2013, F2014}. For the $\ell_1$-minimization problem (\ref{BPmodel}) or (\ref{QCBPmodel})-(\ref{DantzigselectorMedel}), there are also many works under the restricted isometry property (RIP) introduced in \cite{CT2005}. For a matrix $A\in\mathbb{R}^{m\times n}$, $s\in[[1,n]]$, the $s$-th restricted isometry constant $\delta_s=\delta_s(A)$  is the smallest number such that
$$
(1-\delta_s)\|x\|_2\leq\|Ax\|_2^2\leq(1+\delta_s)\|x\|_2^2
$$
for all $x\in\mathbb{R}^n$ with $\|x\|_0\leq s$ . We say that the matrix $A$ satisfies the restricted isometry property if $\delta_s$ is small for reasonably large $s$.  There are many works under this condition, readers can refer to
\cite{CT2005,D2006,CT2006,CT2007,CDD2009, CZ2013-1,CZ2014,ZL2017}. What is worth mentioning is that Cai and Zhang \cite{CZ2013} established a sharp condition $\delta_s+\theta_{s,s}<1$ about restricted isometry constant $\delta_s$ and restricted orthogonality constant $\theta_{s,s}$ for $s$-sparse signal's exact recovery. And they also showed that the condition is sufficient to guarantee the stable recovery for the noisy case.

Here we consider recovering a signal under the framework of cumulative coherence,  a regularity and widely used condition.  Since the cumulative coherence is a generalization of the coherence, we first recall the coherence, which was introduced by Donoho and Huo in \cite{DH2001}.

\begin{definition}\label{MIPDefinition}
Let $A\in\mathbb{R}^{m\times n}$ be a matrix with $\ell_2$-normalized columns $A_1,\ldots,A_n$, i.e., $\|A_i\|_2=1$ for all $i=1,\ldots,n$. The coherence $\mu=\mu(A)$ of matrix $A$ is defined as
\begin{align*}
\mu=\max_{1\leq i\neq j\leq n}|\langle A_i, A_j \rangle|.
\end{align*}
When coherence $\mu$ is small, we say that $A$ satisfies mutual incoherence property (MIP).
\end{definition}

It was first shown by Donoho and Huo \cite{DH2001}, in the noiseless case for the setting where $A$ is a concatenation of two square orthogonal matrices, that $\mu<1/(2s-1)$ ensures the exact recovery of $x$ when $x$ is $s$-sparse. And in the noisy case, Donoho, Elad and Temlyakov \cite{DET2006} showed that sparse signals can be recovered approximately via (\ref{QCBPmodel}) with the error at worst proportional to
the input noise level, under the condition $\mu<1/(4s-1)$. Cai, Wang and Xu \cite{CWX2010} showed that the MIP condition $\mu<1/(2s-1)$ is sharp
for exact recovery of $s$-sparse signals and also got the stable recovery via QCBP model (\ref{QCBPmodel}) and Dantzig selector (\ref{DantzigselectorMedel}) under this condition. More results under the framework of MIP, readers can refer to \cite{T2004-1,T2004-2,T2006,CXZ2009,T2009, X2011, XX2015}.

The coherence parameter does not characterize a measurement matrix very well since it only reflects
the most extreme correlations between columns. When most of the inner products are tiny, the coherence can be downright
misleading. A wavelet packet dictionary exhibits this type of behavior. To remedy this shortcoming, Tropp \cite{T2004-1,T2004-2} introduced the cumulative coherence function, which measures the maximum total coherence between a fixed column and a collection of other columns. This cumulative coherence is a generalization of cumulative coherence, which incorporates the usual coherence as the particular value $s=1$ of its argument.

\begin{definition}\label{CCDefinition}
Let $A\in\mathbb{R}^{m\times n}$ be a matrix with $\ell_2$-normalized columns $A_1,\ldots,A_n$, i.e., $\|A_i\|_2=1$ for all $i=1,\ldots,n$. The cumulative coherence function $\mu_1(s)=\mu_1(A,s)$ of matrix $A$ is defined for $s\in[n-1]$ by
\begin{align*}
\mu_1(s)=\max_{\substack{S\subset\{1,\ldots,n\}\\|S|\leq n}}\max_{i\in S^c}\sum_{j\in S}|\langle A_i, A_j \rangle|.
\end{align*}
When the cumulative coherence of a matrix grows slowly, we say informally that the dictionary is quasi-incoherent.
\end{definition}

In \cite{T2004-2}, Tropp showed that
\begin{align}
\mu_1(s-1)+\mu_1(s)<1
\end{align}
can guarantee that all $s$-sparse signals can be recovered exactly through basis pursuit model (\ref{BPmodel}) or orthogonal matching pursuit. Tropp also gave an example to demonstrate how much the cumulative coherence
function improves on the coherence parameter.
Since then, cumulative coherence was studied by many scholars. For example, Schnass and Vandergheynst \cite{SV2008} gave out the Welch-type bound for the cumulative coherence function.
Herrity, Gilbert and Tropp \cite{HGT2006}, Foucart and Rauhut \cite[Section 5.5]{FR2013} analysed the thresholding algorithms under the condition cumulative coherence.
We also notice that there exists close relationship between coherence, RIP and cumulative coherence. As showed in \cite{T2004-2}, $mu\leq\mu_1(s)\leq s\mu$, and in \cite{FR2013}, Foucart and Rauhut also showed that $\mu_1(s-1)\geq \delta_s$.
More works about cumulative coherence, readers can see \cite{SV2007,L2010,DGN2012}. Because cumulative coherence has a property which is similar to the definition of restricted orthogonality constant (see Lemma \ref{OrthogonalCC}), and motivated by Cai and Zhang's work \cite{CZ2013}, we consider the cumulative coherence analysis of stable recovery via QCBP and Dantzig selector.

Instead of solving (\ref{QCBPmodel}) directly, many authors also studied the following unconstrained Lasso problem
\begin{align}\label{LassoModel}
\min_{x\in\mathbb{R}^n}~\lambda\|x\|_1+\frac{1}{2}\|Ax-b\|_2^2,
\end{align}
where $\lambda\geq 0$. We point out this problem was first introduced in \cite{T1996}.
There are many works about this model. For example, Lin and Li \cite{LL2014}, Shen, Han and Braverman \cite{SHB2015} showed that $x$ can be stably recovered via analysis based approaches under the restricted isometry property.
And Xia and Li \cite{XL2016} got the error bounds in the analysis Lasso by restricted eigenvalue condition under a sparsity scenario and by the $\ell_2$ robust null space property under a non-sparsity scenario. Readers can refer to \cite{DET2006,EMR2007,BRT2009,C2013,TEBN2014,ZYY2016} to see more works about Lasso model. But as far as we know, there lacks MIP or cumulative coherence based theoretical study about Lasso.

In this paper, we purse the cumulative coherence analysis of  QCBP model (\ref{QCBPmodel}), Dantzig selector model (\ref{DantzigselectorMedel}) and Lasso model (\ref{LassoModel}).
Our contributions of this paper can be stated as follows.

First, we show that cumulative coherence function has a property which is similar to the definition of restricted
orthogonality constant (Lemma \ref{OrthogonalCC}). And using the sparse representation of a polytope in \cite[Lemma 1.1]{CZ2014}, we also estimate a key technical tool, which provides a estimate the inner product $\langle Ax, Ay\rangle$ by cumulative coherence function when only one component is sparse (Lemma \ref{NonsparseCC}). Then by this useful tool, we show that the condition $\mu_1(s-1)+\mu_1(2s-1)<1$ is sufficient to guarantee the stable recovery via the constrained $\ell_1$ minimization (\ref{QCBPmodel}) or (\ref{DantzigselectorMedel}) (Theorem \ref{QCBP-DS}). We also show that the condition $\mu_1(s-1)+\mu_1(4s-1)<1/\sqrt{3}$ is sufficient to guarantee the stable recovery via the unconstrained minimization model (\ref{LassoModel}) (Theorem \ref{Lasso}). What should be pointed out is that it is the first time to give the cumulative coherence analysis for the noisy case for $\ell_1$ minimization, especially for Lasso model (\ref{LassoModel}). And our result improves the condition $\mu<1/(4s-1)$ for QCBP model in Donoho, Elad, and Temlyakov \cite{DET2006}, the condition $\mu_1(2s)<1/2$ for iterative hard thresholding and the condition $\mu_1(s-1)+2\mu_1(s)<1$ for hard thresholding pursuit in \cite[Chapter 5]{FR2013}.

Then in Section \ref{s3}, we estimate the closeness of this prediction loss $\|A\hat{x}^{DS}-Ax\|_2^2$ and $\|A\hat{x}^{L}-Ax\|_2^2$ in the framework of cumulative coherence, where $\hat{x}^{DS}$ and $\hat{x}^{L}$ are the minimizers of Dantzig selector model (\ref{DantzigselectorMedel}) and Lasso model (\ref{LassoModel}), respectively.
We get an oracle inequality for sparse signal via Dantzig selector with Gaussian noise under the cumulative coherence in Section \ref{s4}. And in Section \ref{s5}, we investigate relationship between cumulative coherence and restricted eigenvalue condition, and find that the restricted eigenvalue condition can be deduced from the cumulative coherence.

Last, in Section \ref{s6}, we illustrate the advantages of cumulative coherence condition in terms of the recovery performance of sparse signals via extensive numerical experiments. We compute the unconstraint problem Lasso (\ref{LassoModel}) through the IRucLq-v method in \cite{LXY2013},  for three different measurement matrices-decaying matrix, Dirac-Hadamard matrix and Dirac-Fourier matrix. These matrices satisfy the cumulative coherence condition proposed in this paper.

Throughout the article, we use the following basic notation. Let $x_S$ be the vector equal to $x$ on $S$ and to zero on $S^c$. For any positive integer $n$, let $[[1,n]]$ denote the set $\{1,\ldots,n\}$. We also let a vetor $u\in\mathbb{R}^n$  denote an ``indicator vector", i.e., it has only one non-zero entry and the value of this entry is either 1 or -1.

%%%%%%%%%%%%%%%%%%%%%%%%%%%%%%%%%%%%%%%%%%%%%%%%%%%%%%%%%%%%%%%%%%%%
%%%%%%%%%%%%%%%%%%%%%%   section 2 %%%%%%%%%%%%%%%%%%%%%%%%%%%%%%%%%
%%%%%%%%%%%%%%%%%%%%%%%%%%%%%%%%%%%%%%%%%%%%%%%%%%%%%%%%%%%%%%%%%%%
\section{Stable Recovery \label{s2}}
\hskip\parindent

Now, we consider the stable recovery of signals $x\in\mathbb{R}^{n}$ through QCBP model (\ref{QCBPmodel}), Dantzig selector model (\ref{DantzigselectorMedel}) and Lasso model (\ref{LassoModel}). First, we establish two useful properties of cumulative coherence in Subsection \ref{s2.1}. And then we show that the condition $\mu_1(s-1)+\mu_1(2s-1)<1$ is sufficient for the stably recovery through QCBP model (\ref{QCBPmodel}) and Dantzig selector model (\ref{DantzigselectorMedel}) in Subsection \ref{s2.2}.
And in Subsection \ref{s2.3}, we give out a sufficient condition $\mu_1(s-1)+\mu_1(4s-1)<1/\sqrt{3}$, which can guarantee the stable recovery via Lasso model (\ref{LassoModel}).

\subsection{Properties of Cumulative Coherence \label{s2.1}}
\quad

In this subsection, we will give several useful properties of cumulative coherence.
The first one, which is similar to the definition of the $s$-th restricted isometry constant \cite{CT2005}, comes from \cite[Theorem 5.3]{FR2013}.
\begin{lemma}\label{CCLemma}
Let $A\in\mathbb{R}^{m\times n}$ be a matrix with $\ell_2$-normalized columns and $s\in\{1,\ldots,n\}$. For all $s$-sparse vectors $x\in\mathbb{R}^n$,
\begin{align*}
\big(1-\mu_1(s-1)\big)\|x\|_2^2\leq\|Ax\|_2^2\leq\big(1+\mu_1(s-1)\big)\|x\|_2^2.
\end{align*}
\end{lemma}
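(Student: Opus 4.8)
The plan is to expand the squared norm $\|Ax\|_2^2$ as a quadratic form in the Gram matrix of the columns of $A$ and to separate the diagonal from the off-diagonal contribution. Writing $S=\mathrm{supp}(x)$ with $|S|\leq s$, I would use
\[
\|Ax\|_2^2=\langle Ax,Ax\rangle=\sum_{i\in S}\sum_{j\in S}x_ix_j\langle A_i,A_j\rangle.
\]
Since the columns are $\ell_2$-normalized, each diagonal term satisfies $\langle A_i,A_i\rangle=\|A_i\|_2^2=1$, so the diagonal part contributes exactly $\sum_{i\in S}x_i^2=\|x\|_2^2$. This reduces the lemma to showing that the off-diagonal remainder $R:=\sum_{i\neq j}x_ix_j\langle A_i,A_j\rangle$ obeys $|R|\leq\mu_1(s-1)\|x\|_2^2$.

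To bound $R$, I would symmetrize using the elementary inequality $|x_ix_j|\leq\tfrac{1}{2}(x_i^2+x_j^2)$; after exploiting the symmetry $i\leftrightarrow j$ together with $|\langle A_i,A_j\rangle|=|\langle A_j,A_i\rangle|$, this gives
\[
|R|\leq\sum_{i\neq j}|x_ix_j|\,|\langle A_i,A_j\rangle|\leq\sum_{i\in S}x_i^2\sum_{j\in S\setminus\{i\}}|\langle A_i,A_j\rangle|.
\]
The key step is to recognize the inner sum as a cumulative coherence: for each fixed $i$, the index set $S\setminus\{i\}$ has cardinality at most $s-1$ and $i\notin S\setminus\{i\}$, so the definition of $\mu_1$ yields $\sum_{j\in S\setminus\{i\}}|\langle A_i,A_j\rangle|\leq\mu_1(s-1)$ uniformly in $i$. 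Factoring this constant out of the outer sum leaves $\mu_1(s-1)\sum_{i\in S}x_i^2=\mu_1(s-1)\|x\|_2^2$, which is the desired estimate on $R$.

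Combining the diagonal identity with the off-diagonal bound gives $\|x\|_2^2-\mu_1(s-1)\|x\|_2^2\leq\|Ax\|_2^2\leq\|x\|_2^2+\mu_1(s-1)\|x\|_2^2$, which is exactly the claimed two-sided inequality. I do not anticipate a serious obstacle here; the only point requiring care is the bookkeeping in the off-diagonal bound, namely applying the cumulative coherence definition to the size-$(s-1)$ set $S\setminus\{i\}$ rather than to all of $S$, since it is precisely this exclusion of the diagonal index that produces $\mu_1(s-1)$ rather than $\mu_1(s)$.
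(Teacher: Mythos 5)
Your proof is correct. Note, however, that the paper does not prove this lemma at all: it simply cites it as \cite[Theorem 5.3]{FR2013}, so there is no in-paper argument to compare against, only the standard one from Foucart--Rauhut. That standard proof works in matrix language: writing $S=\mathrm{supp}(x)$, one observes that $\|Ax\|_2^2-\|x\|_2^2=\langle (A_S^*A_S-\mathrm{Id})x_S,x_S\rangle$ and bounds the spectral norm of $A_S^*A_S-\mathrm{Id}$ by its maximal absolute row sum (equivalently, via Gershgorin's disc theorem), which is exactly $\max_{i\in S}\sum_{j\in S\setminus\{i\}}|\langle A_i,A_j\rangle|\leq\mu_1(s-1)$. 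Your argument replaces the operator-norm step by the elementary symmetrization $|x_ix_j|\leq\tfrac12(x_i^2+x_j^2)$ applied directly to the off-diagonal quadratic form, but the crux is identical: the row sums of the off-diagonal Gram entries over $S\setminus\{i\}$ are controlled by $\mu_1(s-1)$, with the exclusion of the diagonal index being what yields $\mu_1(s-1)$ rather than $\mu_1(s)$. What your route buys is self-containedness and complete elementarity (no eigenvalue or matrix-norm facts needed); what the Gershgorin route buys is slightly stronger information, namely that all eigenvalues of $A_S^*A_S$ lie in $[1-\mu_1(s-1),\,1+\mu_1(s-1)]$, i.e.\ the restricted isometry constant itself satisfies $\delta_s\leq\mu_1(s-1)$, a fact the paper quotes elsewhere. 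One cosmetic remark: the paper's Definition \ref{CCDefinition} contains the typo $|S|\leq n$ where $|S|\leq s$ is intended; your proof implicitly (and correctly) uses the intended definition.
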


And the second one provides a way to estimate the inner product $\langle Ax, Ay\rangle$ by cumulative coherence function $\mu_1$ when both component are  sparse. This property is similar to the definition of the $(s,t)$-restricted orthogonality constant $\theta_{s,t}$ \cite{CT2005}.
\begin{lemma}\label{OrthogonalCC}
Suppose that $x$ is $s$-sparse and $y$ is $t$-sparse, then
$$
\big|\langle Ax, Ay \rangle-\langle x,y\rangle\big|\leq\mu_1(s+t-1)\|x\|_2\|y\|_2.
$$
And moreover, if $\text{supp}(x)\cap\text{supp}(y)=\emptyset$, then
$$
\big|\langle Ax, Ay \rangle\big|\leq\mu_1(s+t-1)\|x\|_2\|y\|_2.
$$
\end{lemma}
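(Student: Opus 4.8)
The plan is to expand both $Ax$ and $Ay$ over the columns of $A$, cancel the diagonal Gram entries against $\langle x,y\rangle$, and then bound the remaining off-diagonal double sum by invoking the definition of cumulative coherence through a Cauchy--Schwarz splitting. First I would set $S=\mathrm{supp}(x)$ and $T=\mathrm{supp}(y)$, so that $|S|\le s$ and $|T|\le t$, and write $Ax=\sum_{i\in S}x_iA_i$, $Ay=\sum_{j\in T}y_jA_j$, whence
$$
\langle Ax,Ay\rangle=\sum_{i\in S}\sum_{j\in T}x_iy_j\langle A_i,A_j\rangle .
$$
Because the columns are $\ell_2$-normalized, $\langle A_i,A_i\rangle=1$, so the diagonal part $\sum_{i\in S\cap T}x_iy_i\langle A_i,A_i\rangle=\sum_{i\in S\cap T}x_iy_i$ is exactly $\langle x,y\rangle$ (every coordinate outside $S\cap T$ contributes zero to $\langle x,y\rangle$). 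Subtracting it leaves only the off-diagonal terms, namely
$$
\langle Ax,Ay\rangle-\langle x,y\rangle=\sum_{\substack{i\in S,\,j\in T\\ i\neq j}}x_iy_j\langle A_i,A_j\rangle ,
$$
so the whole lemma reduces to estimating this off-diagonal sum. Tracking this cancellation correctly — in particular recognizing that $\langle x,y\rangle$ accounts for precisely the $i=j$ contributions — is the one bookkeeping point that must be handled with care.

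To bound the off-diagonal sum I would pass to absolute values, split each Gram entry symmetrically as $|\langle A_i,A_j\rangle|=|\langle A_i,A_j\rangle|^{1/2}\,|\langle A_i,A_j\rangle|^{1/2}$, and apply Cauchy--Schwarz over the index pairs $(i,j)$:
$$
\sum_{\substack{i\in S,\,j\in T\\ i\neq j}}|x_i||y_j||\langle A_i,A_j\rangle|
\le\Big(\sum_{i\in S}|x_i|^2\!\!\sum_{\substack{j\in T\\ j\neq i}}|\langle A_i,A_j\rangle|\Big)^{1/2}
\Big(\sum_{j\in T}|y_j|^2\!\!\sum_{\substack{i\in S\\ i\neq j}}|\langle A_i,A_j\rangle|\Big)^{1/2}.
$$
For each fixed $i$ the inner index set $\{j\in T:j\neq i\}$ has cardinality at most $t$ and does not contain $i$, so by Definition~\ref{CCDefinition} one has $\sum_{j\in T,\,j\neq i}|\langle A_i,A_j\rangle|\le\mu_1(t)$; symmetrically the inner sum in the second factor is at most $\mu_1(s)$. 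This produces the bound $\sqrt{\mu_1(s)\mu_1(t)}\,\|x\|_2\|y\|_2$, and since $\mu_1$ is nondecreasing and $s,t\ge1$ we have $\sqrt{\mu_1(s)\mu_1(t)}\le\mu_1(\max\{s,t\})\le\mu_1(s+t-1)$, which yields the first inequality.

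The disjoint-support case is then immediate: if $\mathrm{supp}(x)\cap\mathrm{supp}(y)=\emptyset$ then $\langle x,y\rangle=0$, so the second inequality is just the first with the subtracted term vanishing (equivalently, no diagonal terms arise in the double sum and the same estimate applies verbatim). I do not expect a genuine obstacle in this argument; the two places deserving attention are the diagonal cancellation described above and the verification that the index being fixed lies outside the set over which the coherence sum runs — which is precisely what the restriction $i\neq j$ ensures, so that the defining maximum for $\mu_1$ legitimately applies.
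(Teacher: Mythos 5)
Your proof is correct, and it takes a genuinely different route from the paper's. The paper proves this lemma via the polarization identity
\begin{equation*}
\langle Ax, Ay\rangle=\tfrac{1}{4}\big(\|A(x+y)\|_2^2-\|A(x-y)\|_2^2\big),
\end{equation*}
noting that $x\pm y$ is $(s+t)$-sparse, applying the norm bounds of Lemma \ref{CCLemma} with constant $\mu_1(s+t-1)$ to both terms (first for unit-norm $x,y$, then rescaling); this is the standard ``RIP implies restricted orthogonality'' argument, adapted from Lin--Li--Shen. You instead expand the Gram form $\langle Ax,Ay\rangle=\sum_{i\in S}\sum_{j\in T}x_iy_j\langle A_i,A_j\rangle$, cancel the diagonal terms against $\langle x,y\rangle$ using the column normalization, and bound the off-diagonal double sum by a symmetric Cauchy--Schwarz splitting of $|\langle A_i,A_j\rangle|$ together with a direct appeal to Definition \ref{CCDefinition} (your use of it is legitimate: each inner index set excludes the fixed index and has the right cardinality). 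Both arguments are sound, but they buy different things. Your argument is more elementary and self-contained, using only the definition of $\mu_1$ rather than Lemma \ref{CCLemma}, it handles non-unit vectors directly without the rescaling step, and it actually yields the sharper intermediate bound $\sqrt{\mu_1(s)\mu_1(t)}\,\|x\|_2\|y\|_2\le\mu_1(\max\{s,t\})\,\|x\|_2\|y\|_2$, which can be strictly better than the stated $\mu_1(s+t-1)\|x\|_2\|y\|_2$: for the Dirac--Hadamard matrix of Example \ref{DiracHadamard}, where $\mu_1(s)=s/\sqrt{m}$, taking $s=t$ gives $s/\sqrt{m}$ versus $(2s-1)/\sqrt{m}$, nearly a factor of two. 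The paper's polarization route, on the other hand, recycles Lemma \ref{CCLemma}, which is needed elsewhere in the paper anyway, and would transfer to any setting where one only has two-sided norm estimates rather than access to individual Gram entries.
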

\begin{proof}
Our proof follows the idea of \cite[Lemma II.2]{LLS2013}.

Suppose that $\text{supp}(x)\subset S$ with $|S|=s$ and $\text{supp}(y)\subset T$ with $|T|=t$, and assume that $\|x\|_2=\|y\|_2=1$. We consider the following identity
\begin{align}\label{e2.1}
\langle Ax, Ay\rangle=\frac{1}{4}\big(\|A(x+y)\|_2^2-\|A(x-y)\|_2^2\big).
\end{align}
Note that $\|x+y\|_0=\|x-y\|_0\leq s+t$. According to Lemma \ref{CCLemma}, we have
\begin{align*}
\langle Ax, Ay\rangle&\geq \frac{1}{4}\Big(\big(1-\mu_1(s+t-1)\big)\|x+y\|_2^2-\big(1+\mu_1(s+t-1)\big)\|x-y\|_2^2\Big)\\
&=\langle x,y\rangle-\frac{1}{2}\mu_1(s+t-1)\big(\|x\|_2^2+\|y\|_2^2\big)\\
&=\langle x,y \rangle-\mu_1(s+t-1),
\end{align*}
i.e.,
\begin{align}\label{e2.2}
\langle Ax, Ay\rangle-\langle x,y \rangle\geq-\mu_1(s+t-1).
\end{align}

On the other hand,
\begin{align*}
\langle Ax, Ay\rangle&\leq \frac{1}{4}\Big(\big(1+\mu_1(s+t-1)\big)\|x+y\|_2^2-\big(1-\mu_1(s+t-1)\big)\|x-y\|_2^2\Big)\\
&=\langle x,y\rangle+\frac{1}{2}\mu_1(s+t-1)\big(\|x\|_2^2+\|y\|_2^2\big)\\
&=\langle x,y \rangle+\mu_1(s+t-1),
\end{align*}
i.e.,
\begin{align}\label{e2.3}
\langle Ax, Ay\rangle-\langle x,y \rangle\leq\mu_1(s+t-1).
\end{align}
Combining the inequality (\ref{e2.2}) and the inequality (\ref{e2.3}), we have
\begin{align}\label{e2.4}
|\langle Ax, Ay\rangle-\langle x,y \rangle|\leq\mu_1(s+t-1).
\end{align}

For general $x$ and $y$, we consider $x'=x/\|x\|_2$ and $y'=y/\|y\|_2$. Then by (\ref{e2.4}),  we have
\begin{align*}
\bigg|\langle A\frac{x}{\|x\|_2}, A\frac{y}{\|y\|_2}\rangle-\langle \frac{x}{\|x\|_2},\frac{y}{\|y\|_2} \rangle\bigg|
=|\langle Ax', Ay'\rangle-\langle x',y' \rangle|\leq\mu_1(s+t-1),
\end{align*}
which implies that
\begin{align*}
|\langle Ax, Ay\rangle-\langle x,y \rangle|\leq\mu_1(s+t-1)\|x\|_2\|y\|_2.
\end{align*}
\end{proof}

Now, we can give out the key technical tool used in the main results. It provides a way to estimate the inner product $\langle Ax, Ay\rangle$ by cumulative coherence function $\mu_1$ when only one component is sparse. Our idea is inspired by \cite[Lemma 5.1]{CZ2013}.

\begin{lemma}\label{NonsparseCC}
Let $s_1, s_2\leq n$ and $\alpha\geq 0$. Suppose $x,y\in\mathbb{R}^n$ satisfies $\text{supp}(x)\cap \text{supp}(y)=\emptyset$ and $x$ is $s_1$ sparse. If $\|y\|_1\leq\alpha s_2$ and $\|y\|_\infty\leq\alpha$, then
\begin{align}\label{e2.5}
|\langle Ax, Ay \rangle|\leq\alpha \sqrt{s_2}\mu_1(s_1+s_2-1)\|x\|_2.
\end{align}
\end{lemma}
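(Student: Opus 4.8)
The plan is to reduce the possibly nonsparse vector $y$ to a convex combination of $s_2$-sparse vectors and then apply the disjoint-support estimate of Lemma \ref{OrthogonalCC} to each piece. The hypotheses $\|y\|_\infty\leq\alpha$ and $\|y\|_1\leq\alpha s_2$ say precisely that $y$ lies in the polytope $\{v\in\mathbb{R}^n:\|v\|_\infty\leq\alpha,\ \|v\|_1\leq\alpha s_2\}$, so the desired convex decomposition is exactly what the sparse representation of a polytope, \cite[Lemma 1.1]{CZ2014}, provides. Locating and invoking this structural fact with the correct bookkeeping is the main obstacle; everything after it is linear and routine.

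Concretely, I would first write $y=\sum_{i}\lambda_i u_i$ with $\lambda_i\geq 0$, $\sum_i\lambda_i=1$, where each $u_i$ is $s_2$-sparse, $\text{supp}(u_i)\subseteq\text{supp}(y)$, and $\|u_i\|_\infty\leq\alpha$. Two consequences are immediate. Since $\text{supp}(u_i)\subseteq\text{supp}(y)$ and $\text{supp}(x)\cap\text{supp}(y)=\emptyset$, the supports of $x$ and of each $u_i$ are disjoint. And since $u_i$ is $s_2$-sparse with entries bounded by $\alpha$, its Euclidean norm satisfies $\|u_i\|_2\leq\sqrt{s_2}\,\|u_i\|_\infty\leq\sqrt{s_2}\,\alpha$.

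Next I would apply the disjoint-support bound of Lemma \ref{OrthogonalCC} to the $s_1$-sparse vector $x$ and each $s_2$-sparse vector $u_i$, obtaining, uniformly in $i$,
\begin{align*}
|\langle Ax, Au_i\rangle|\leq\mu_1(s_1+s_2-1)\|x\|_2\|u_i\|_2\leq\alpha\sqrt{s_2}\,\mu_1(s_1+s_2-1)\|x\|_2.
\end{align*}
Finally, using linearity of the inner product in its second argument, the triangle inequality, and $\sum_i\lambda_i=1$,
\begin{align*}
|\langle Ax, Ay\rangle|=\Big|\sum_i\lambda_i\langle Ax, Au_i\rangle\Big|\leq\sum_i\lambda_i|\langle Ax, Au_i\rangle|\leq\alpha\sqrt{s_2}\,\mu_1(s_1+s_2-1)\|x\|_2,
\end{align*}
which is exactly (\ref{e2.5}). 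The only delicate point to check is that the decomposition supplied by \cite[Lemma 1.1]{CZ2014} keeps each summand supported inside $\text{supp}(y)$ (so disjointness with $\text{supp}(x)$ survives) and $s_2$-sparse with the $\ell_\infty$ bound intact; granting those properties, the conclusion is just convexity applied to the per-piece estimate coming from Lemma \ref{OrthogonalCC}.
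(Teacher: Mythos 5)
Your proof is correct and uses the same two ingredients as the paper's own argument: the polytope representation of \cite[Lemma 1.1]{CZ2014} (which indeed produces pieces $u_i$ with $\text{supp}(u_i)\subseteq\text{supp}(y)$, $\|u_i\|_0\leq s_2$, and $\|u_i\|_\infty\leq\alpha$, so the point you flag as delicate does hold) together with the disjoint-support estimate of Lemma \ref{OrthogonalCC} applied to each piece. The only difference is structural: the paper splits into the cases $\|y\|_0\leq s_2$ and $\|y\|_0>s_2$ and runs an induction on $\|y\|_0$ in the latter case, whereas you decompose once and bound every piece directly and uniformly --- a mild streamlining, since the paper's induction step reduces to exactly the per-piece estimate you write down.
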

\begin{proof}
Suppose $\|y\|_0=t$. We consider two cases as follows.

\textbf{Case I: $t\leq s_2$.}

By Lemma \ref{OrthogonalCC} and $\|y\|_\infty\leq\alpha$, we have
\begin{align*}
|\langle Ax, Ay \rangle|&\leq\mu_1(s_1+s_2-1)\|x\|_2\|y\|_2\leq\mu_1(s_1+s_2-1)\|x\|_2\|y\|_{\infty}\sqrt{\|y\|_0}\\
&\leq\alpha\sqrt{t}\mu_1(s_1+s_2-1)\|x\|_2\leq\alpha \sqrt{s_2}\mu_1(s_1+s_2-1)\|x\|_2.
\end{align*}

\textbf{Case II: $t> s_2$.}

We shall prove by induction. Assume that (\ref{e2.5}) holds for $t-1$.
For any $0<p\leq\infty$, let $\mathcal{B}^{\ell_p}(r)=\{u\in\mathbb{R}^n: \|u\|_p\leq r\}$. The condition $\|y\|_1\leq\alpha s_2$ and $\|y\|_{\infty}\leq \alpha$ imply that $y\in s_2\mathcal{B}^{\ell_1}(\alpha)\cap\mathcal{B}^{\ell_{\infty}}(\alpha)$.  By sparse representation of
a polytope in \cite[Lemma 1.1]{CZ2014}, $y$ can be represented as the convex hull of $s_2$-sparse vectors:
$$
y=\sum_{j=1}^N\gamma_ju^j,
$$
where $u^j$ is $s_2$-sparse for all $j\in[N]$ and
\begin{align*}
\sum_{j=1}^N\gamma_j=1,~~0<\gamma_j\leq 1, j\in[N].
\end{align*}

Since $u^j$ is $s_2$-sparse and $s_2\leq t-1$, we can use the induction assumption,
\begin{align*}
|\langle Ax, Ay\rangle|&\leq\sum_{j=1}^N\gamma_j|\langle Ax, Au^j\rangle|\leq\sum_{j=1}^N\gamma_j\bigg(\alpha\sqrt{s_2}\mu_1(s_1+s_2-1)\|x\|_2\bigg)\\
&=\alpha\sqrt{s_2}\mu_1(s_1+s_2-1)\|x\|_2,
\end{align*}
which gives (\ref{e2.5}) for $t$.
\end{proof}

\subsection{Stable Recovery for Dantzig Selector and QCBP \label{s2.2}}
\hskip\parindent

In this subsection, we consider the stable recovery of signals through Dantzig selector model (\ref{DantzigselectorMedel}) and QCBP model (\ref{QCBPmodel}).

\begin{theorem}\label{QCBP-DSgeneral}
Assume the cumulative coherence function of measurement matrix $A$ satisfies the condition
\begin{align}\label{QCBP-DScondition}
\mu_1(a-1)+C_{a,s}\mu_1(2s-1)<1,
\end{align}
where $C_{a,s}=\sqrt{(2s-a)/a}$ and $1\leq a\leq s$.
Let $\hat{x}^{DS}$ be the solution to the Dantzig selector (\ref{DantzigselectorMedel}), then
\begin{align*}
\|\hat{x}^{DS}-x\|_2&\leq\frac{2\sqrt{2}\sqrt{s}}{1-\mu_1(a-1)-C_{a,s}\mu_1(2s-1)}\eta\\
&\hspace*{12pt}+\bigg(\frac{\sqrt{2}s\mu_1(2s-1)/a}{\big(1-\mu_1(a-1)-C_{a,s}\mu_1(2s-1)\big)C_{a,s}}+1\bigg)\frac{2\|x_{-\max(s)}\|_1}{\sqrt{s}}.
\end{align*}
Let $\hat{x}^{\ell_2}$ be the solution to the QCBP (\ref{QCBPmodel}), then
\begin{align*}
\|\hat{x}^{\ell_2}-x\|_2&\leq\frac{2\sqrt{2}\sqrt{1+\mu_1(a-1)}}{1-\mu_1(a-1)-C_{a,s}\mu_1(2s-1)}\eta\\
&\hspace*{12pt}+\bigg(\frac{\sqrt{2}s\mu_1(2s-1)/a}{\big(1-\mu_1(a-1)-C_{a,s}\mu_1(2s-1)\big)C_{a,s}}+1\bigg)\frac{2\|x_{-\max(s)}\|_1}{\sqrt{s}}.
\end{align*}
\end{theorem}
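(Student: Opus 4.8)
The plan is to handle both estimators in parallel through the error vector $h=\hat x-x$ and to funnel everything into a single master inequality of the shape $D\,\|h_T\|_2 \le (\text{noise})+(\text{tail})$, where $D=1-\mu_1(a-1)-C_{a,s}\mu_1(2s-1)>0$ is exactly the positive quantity supplied by hypothesis (\ref{QCBP-DScondition}) and $T$ is a suitable sparse support. First I would record the data-fidelity consequences of feasibility. Assuming $\|A^*z\|_\infty\le\eta$ (resp.\ $\|z\|_2\le\eta$) so that the true signal $x$ is feasible, minimality of $\hat x^{DS}$ (resp.\ $\hat x^{\ell_2}$) together with the triangle inequality gives $\|A^*Ah\|_\infty\le 2\eta$ (resp.\ $\|Ah\|_2\le 2\eta$). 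Hence for any sparse test vector $w$ one has $|\langle Ah,Aw\rangle|\le 2\eta\|w\|_1$ in the Dantzig case, while $|\langle Ah,Aw\rangle|\le 2\eta\|Aw\|_2\le 2\eta\sqrt{1+\mu_1(a-1)}\,\|w\|_2$ in the QCBP case by Lemma \ref{CCLemma}. This $\ell_1$-duality versus $\ell_2$ discrepancy is the only structural difference between the two arguments, and it is what ultimately yields the two distinct noise constants ($\sqrt s$ for Dantzig, $\sqrt{1+\mu_1(a-1)}$ for QCBP); the remainder runs verbatim for both.

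Next, from $\|\hat x\|_1\le\|x\|_1$ I would extract the cone condition $\|h_{S^c}\|_1\le\|h_S\|_1+2\|x_{-\max(s)}\|_1$ with $S=\max(s)$, and then split $h$ into an $a$-sparse head and a tail. Let $T$ be the support of the $a$ largest-magnitude entries of $h$, so that $h_T$ is $a$-sparse and $\|h_{T^c}\|_\infty\le\|h_T\|_2/\sqrt a$. Writing $\langle Ah,Ah_T\rangle=\|Ah_T\|_2^2+\langle Ah_{T^c},Ah_T\rangle$, I would bound the diagonal term below by $(1-\mu_1(a-1))\|h_T\|_2^2$ via Lemma \ref{CCLemma} — this is the source of $\mu_1(a-1)$. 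For the cross term I would feed $h_{T^c}$ into Lemma \ref{NonsparseCC} with $s_1=a$ and $s_2=2s-a$: since $s_1+s_2-1=2s-1$, this is precisely why $\mu_1(2s-1)$ is the relevant coherence, and since $\sqrt{s_2}\,\|h_{T^c}\|_\infty\le\sqrt{(2s-a)/a}\,\|h_T\|_2=C_{a,s}\|h_T\|_2$, the sharp constant $C_{a,s}$ is manufactured exactly. The cone bound on $\|h_{T^c}\|_1$ is what lets $h_{T^c}$ meet the polytope hypothesis $\|h_{T^c}\|_1\le\alpha s_2$ of Lemma \ref{NonsparseCC} and simultaneously routes in the $\|x_{-\max(s)}\|_1$ contribution.

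Combining the upper bound from the data-fidelity step with this lower bound yields $D\,\|h_T\|_2^2\le(\text{noise})\,\|h_T\|_{(\cdot)}+(\text{tail})\,\|h_T\|_2$, which I would solve for $\|h_T\|_2$. A final passage $\|h\|_2\le\|h_T\|_2+\|h_{T^c}\|_2$, estimating $\|h_{T^c}\|_2$ by the same $\ell_\infty$–$\ell_1$ interpolation and cone bound, assembles the stated coefficients, including the clean additive term $2\|x_{-\max(s)}\|_1/\sqrt s$. Since $\mu_1$ is monotone and $1\le a\le s$ is free, the resulting family of bounds can then be optimized over $a$.

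The hard part will be the tail bookkeeping in the last two steps rather than any conceptual difficulty: choosing the scale $\alpha$ and the split $s_2=2s-a$ so that Lemma \ref{NonsparseCC} delivers precisely $C_{a,s}\mu_1(2s-1)$ instead of a lossy constant, and reconciling the size-$a$ coherence control with the size-$s$ norm conversions needed to produce the stated $\sqrt s$ factor and the exact coefficient $\frac{\sqrt2\,s\mu_1(2s-1)/a}{D\,C_{a,s}}+1$. Threading the cone's $\|x_{-\max(s)}\|_1$ term through both the cross term and the final $\|h_{T^c}\|_2$ estimate while keeping every $\ell_\infty$–$\ell_1$–$\ell_2$ conversion on $T$ versus $T^c$ tight is where the genuine work lies; once the master inequality is in hand, solving it is routine.
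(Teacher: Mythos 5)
Your proposal follows essentially the same route as the paper's own proof: the same test identity $\langle Ah,Ah_{\max(a)}\rangle=\|Ah_{\max(a)}\|_2^2+\langle Ah_{-\max(a)},Ah_{\max(a)}\rangle$, the lower bound via Lemma \ref{CCLemma}, the cross term via Lemma \ref{NonsparseCC} with exactly $s_1=a$, $s_2=2s-a$ and the scale $\alpha$ enlarged by the cone term $2\|x_{-\max(s)}\|_1/(2s-a)$, the upper bound via $\ell_1$--$\ell_\infty$ duality for the Dantzig selector versus Cauchy--Schwarz plus Lemma \ref{CCLemma} for QCBP, and the identification of that duality step as the only structural difference between the two estimators.

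One step as written would not reproduce the stated constants: your final assembly $\|h\|_2\le\|h_T\|_2+\|h_{T^c}\|_2$ is a plain triangle inequality and yields a factor $1+C_{a,s}$ in front of $\|h_T\|_2$ (equal to $2$ at $a=s$, hence a noise coefficient $4\sqrt{s}/D$), whereas the theorem requires $\sqrt{2s/a}=\sqrt{1+C_{a,s}^2}$ (equal to $\sqrt{2}$ at $a=s$, giving $2\sqrt{2}\sqrt{s}/D$). The paper avoids this loss by keeping the orthogonal decomposition $\|h\|_2=\bigl(\|h_{\max(s)}\|_2^2+\|h_{-\max(s)}\|_2^2\bigr)^{1/2}$, bounding the tail at level $s$ through Lemma \ref{ApproximatingerroreEtimate} (with $\rho=2\|x_{-\max(s)}\|_1$, $\beta=2$) to get $\|h_{-\max(s)}\|_2\le\|h_{\max(s)}\|_2+2\|x_{-\max(s)}\|_1/\sqrt{s}$, and only then passing to level $a$ via $\|h_{\max(s)}\|_2\le\sqrt{s/a}\,\|h_{\max(a)}\|_2$. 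Your $\ell_\infty$--$\ell_1$ interpolation bound on $\|h_{T^c}\|_2$ works equally well for this purpose, but you must feed it into the Pythagorean form $\sqrt{\|h_T\|_2^2+\|h_{T^c}\|_2^2}$ rather than into a sum of norms; with that single substitution your argument closes with exactly the stated coefficients.
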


To prove Theorem \ref{QCBP-DSgeneral}, we need the following two auxiliary lemmas. The first one is the cone constraint inequality, which comes from \cite[Page 1215]{CRT2006} for $\hat{x}^{\ell_2}$, and \cite[Page 2330]{CT2007} for $\hat{x}^{DS}$.

\begin{lemma}\label{ConeconstriantinequalityLemma}
The minimization solutions $\hat{x}^{DS}$ of (\ref{DantzigselectorMedel}) and $\hat{x}^{\ell_2}$ of (\ref{QCBPmodel}) satisfy
\begin{align*}
\|h_{-\max(s)}\|_1\leq \|h_{\max(s)}\|_1+2\|x_{-\max(s)}\|_1,
\end{align*}
where $h=\hat{x}^{DS}-x$ or $h=\hat{x}^{\ell_2}-x$.
\end{lemma}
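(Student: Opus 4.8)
The plan is to prove the cone constraint inequality by exploiting the optimality of the minimizers $\hat{x}^{DS}$ and $\hat{x}^{\ell_2}$, namely that the true signal $x$ is feasible for each problem and the minimizer has $\ell_1$ norm no larger than that of $x$. First I would verify feasibility: for the QCBP model this is immediate since $b = Ax + z$ with $\|z\|_2 \le \eta$ gives $\|b - Ax\|_2 = \|z\|_2 \le \eta$; for the Dantzig selector I need $\|A^*(b - Ax)\|_\infty = \|A^* z\|_\infty \le \eta$, which is the standing assumption on the noise in that model. In both cases the minimizer thus satisfies $\|\hat{x}\|_1 \le \|x\|_1$, and writing $h = \hat{x} - x$ this yields $\|x + h\|_1 \le \|x\|_1$.

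Next I would decompose both $x$ and $h$ relative to the support of the $s$ largest entries of $x$. Let $S = \max(s)$ denote the index set of the $s$ largest-magnitude coordinates of $x$, so that $x_S = x_{\max(s)}$ and $x_{S^c} = x_{-\max(s)}$. The key step is the standard triangle-inequality manipulation on $\|x + h\|_1 \le \|x\|_1$: splitting the $\ell_1$ norm over $S$ and $S^c$, I would write
\begin{align*}
\|x\|_1 \ge \|x + h\|_1 = \|x_S + h_S\|_1 + \|x_{S^c} + h_{S^c}\|_1.
\end{align*}
On $S$ the reverse triangle inequality gives $\|x_S + h_S\|_1 \ge \|x_S\|_1 - \|h_S\|_1$, and on $S^c$ the ordinary triangle inequality gives $\|x_{S^c} + h_{S^c}\|_1 \ge \|h_{S^c}\|_1 - \|x_{S^c}\|_1$. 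Substituting these and recalling $\|x\|_1 = \|x_S\|_1 + \|x_{S^c}\|_1$, the terms $\|x_S\|_1$ cancel and rearranging leaves
\begin{align*}
\|h_{S^c}\|_1 \le \|h_S\|_1 + 2\|x_{S^c}\|_1,
\end{align*}
which is precisely the claimed inequality once I identify $h_S = h_{\max(s)}$, $h_{S^c} = h_{-\max(s)}$, and $x_{S^c} = x_{-\max(s)}$.

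I do not expect a genuine obstacle here, since this is the classical cone condition whose derivation is purely combinatorial once feasibility is in hand; the only point requiring care is the feasibility of $x$ for the Dantzig selector, which hinges on the implicit assumption that the noise obeys $\|A^* z\|_\infty \le \eta$ (the analogue of $\|z\|_2 \le \eta$ for QCBP). I would state this noise condition explicitly at the outset so that $x$ lies in the feasible set and the comparison $\|\hat{x}\|_1 \le \|x\|_1$ is justified. Since both minimizers satisfy the same structural inequality $\|\hat{x}\|_1 \le \|x\|_1$ by optimality, the remaining argument is identical for $h = \hat{x}^{DS} - x$ and $h = \hat{x}^{\ell_2} - x$, so a single derivation covers both cases as stated in the lemma.
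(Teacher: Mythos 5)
Your overall route is the right one and is in fact the standard argument behind this lemma (the paper itself offers no proof, merely citing the relevant pages of Cand\`{e}s--Romberg--Tao and Cand\`{e}s--Tao, where this derivation appears): feasibility of $x$ for each model gives $\|\hat{x}\|_1\le\|x\|_1$, and splitting the $\ell_1$ norm over the support of the $s$ largest entries of $x$ yields a cone inequality. However, your final identification is not valid: in this paper's notation, $h_{\max(s)}$ denotes $h$ with all but \emph{its own} $s$ largest entries in absolute value set to zero, whereas your $h_S$ is the restriction of $h$ to the index set $S$ of the $s$ largest entries of $x$. These are different vectors in general, so the closing claim ``$h_S = h_{\max(s)}$, $h_{S^c}=h_{-\max(s)}$'' is false as stated, and your derivation literally proves $\|h_{S^c}\|_1\le\|h_S\|_1+2\|x_{S^c}\|_1$ rather than the lemma.

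The gap is small and easily closed. Since $|S|=s$, the sum of $|h_i|$ over $S$ is at most the sum of the $s$ largest entries of $h$ in absolute value, i.e. $\|h_S\|_1\le\|h_{\max(s)}\|_1$, and consequently $\|h_{-\max(s)}\|_1=\|h\|_1-\|h_{\max(s)}\|_1\le\|h\|_1-\|h_S\|_1=\|h_{S^c}\|_1$. Feeding these two observations into the inequality you established gives
\begin{align*}
\|h_{-\max(s)}\|_1\le\|h_{S^c}\|_1\le\|h_S\|_1+2\|x_{S^c}\|_1\le\|h_{\max(s)}\|_1+2\|x_{-\max(s)}\|_1,
\end{align*}
which is the claimed statement (here $x_{S^c}=x_{-\max(s)}$ \emph{is} legitimate, since $S$ is by definition the set of the $s$ largest entries of $x$). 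Note this is not a pedantic point: the paper genuinely uses the lemma in the form involving $h$'s own largest entries, e.g.\ in the proof of Theorem \ref{QCBP-DSgeneral}, where the ordered coefficients $c_j$ of $h$ satisfy $\sum_{j=s+1}^n c_j\le\sum_{j=1}^s c_j+2\|x_{-\max(s)}\|_1$. With this one extra line your proof is complete; everything else, including the feasibility discussion for both the QCBP and Dantzig selector models, is correct.
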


The second one comes from \cite{CZ2013-1}, which will be used to estimate $\|(\hat{x}-x)_{-\max(s)}\|_2$.
\begin{lemma}\label{ApproximatingerroreEtimate}
Suppose $n\geq s$, $c_1\geq c_2\geq\cdots\geq c_n\geq 0$, $\sum_{j=1}^sc_j\geq\sum_{j=s+1}^nc_j$, then for all $\beta\geq 1$,
$$
\sum_{j=s+1}^nc_j^{\beta}\leq\sum_{j=1}^sc_j^{\beta}.
$$
Moreover generally, suppose $c_1\geq c_2\geq\cdots\geq c_n\geq 0$, $\rho\geq 0$ and $\sum_{j=1}^sc_j+\rho\geq\sum_{j=s+1}^nc_j$, then for all $\beta\geq 1$,
$$
\sum_{j=s+1}^nc_j^{\beta}\leq s\Bigg(\sqrt[\beta]{\frac{\sum_{j=1}^sc_j^{\beta}}{s}}+\frac{\rho}{s}\Bigg)^{\beta}.
$$
\end{lemma}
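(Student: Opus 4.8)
The plan is to prove the two assertions in order, deriving the second (general) one from the first (special) one, since the general statement reduces to the special one with $\rho=0$.

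For the special case I would pivot the whole argument on the threshold value $c_s$. Because the sequence is nonincreasing and $\beta\ge 1$, every tail index $j>s$ satisfies $c_j\le c_s$, hence $c_j^{\beta}=c_j\,c_j^{\beta-1}\le c_s^{\beta-1}c_j$, and summing gives $\sum_{j=s+1}^{n}c_j^{\beta}\le c_s^{\beta-1}\sum_{j=s+1}^{n}c_j$. Dually, every head index $j\le s$ satisfies $c_j\ge c_s$, so $c_j^{\beta}\ge c_s^{\beta-1}c_j$ and $\sum_{j=1}^{s}c_j^{\beta}\ge c_s^{\beta-1}\sum_{j=1}^{s}c_j$. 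The hypothesis $\sum_{j=1}^{s}c_j\ge\sum_{j=s+1}^{n}c_j$ then chains the two estimates into
\[
\sum_{j=s+1}^{n}c_j^{\beta}\le c_s^{\beta-1}\sum_{j=s+1}^{n}c_j\le c_s^{\beta-1}\sum_{j=1}^{s}c_j\le\sum_{j=1}^{s}c_j^{\beta},
\]
which is the claim. The one degenerate point, namely $c_s=0$, is handled separately: monotonicity then forces $c_{s+1}=\cdots=c_n=0$, so the tail sum vanishes and the inequality is trivial; this also sidesteps the ambiguity of $c_s^{\beta-1}$ at $\beta=1$.

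For the general case I would absorb the slack $\rho$ into the head and then invoke the special case. Define $\tilde c_j=c_j+\rho/s$ for $j\le s$ and $\tilde c_j=c_j$ for $j>s$. This perturbation keeps the sequence nonincreasing, since the head entries only grow and hence remain above $c_{s+1}$, and it upgrades the budget to $\sum_{j=1}^{s}\tilde c_j=\rho+\sum_{j=1}^{s}c_j\ge\sum_{j=s+1}^{n}c_j=\sum_{j=s+1}^{n}\tilde c_j$, so the special case applies to $\tilde c$ and yields $\sum_{j=s+1}^{n}c_j^{\beta}=\sum_{j=s+1}^{n}\tilde c_j^{\beta}\le\sum_{j=1}^{s}\big(c_j+\rho/s\big)^{\beta}$. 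It then remains only to bound this right-hand side by the asserted quantity, which is exactly Minkowski's inequality in $\ell_\beta$ applied to the vectors $(c_1,\dots,c_s)$ and the constant vector $(\rho/s,\dots,\rho/s)$: taking $\beta$-th roots gives $\big(\sum_{j=1}^{s}(c_j+\rho/s)^{\beta}\big)^{1/\beta}\le\big(\sum_{j=1}^{s}c_j^{\beta}\big)^{1/\beta}+(\rho/s)\,s^{1/\beta}$, and raising back to the $\beta$-th power while factoring $s^{1/\beta}$ out of the bracket reproduces $s\big(\sqrt[\beta]{(\sum_{j=1}^{s}c_j^{\beta})/s}+\rho/s\big)^{\beta}$ verbatim.

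The factoring computations are routine; the points that demand care are only the degenerate case $c_s=0$ above and the verification that the shifted sequence $\tilde c$ is still admissible for the special case. The genuinely substantive idea, and the step I expect to be the crux, is recognizing the shift-by-$\rho/s$ reduction together with the fact that the resulting estimate is precisely a Minkowski inequality; once this is spotted, no further estimation is needed.
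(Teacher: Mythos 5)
Your proof is correct, but note that there is nothing in the paper to compare it against: the paper does not prove this lemma, it simply imports it from Cai and Zhang \cite{CZ2013-1}, so your argument is a self-contained replacement for a citation. For the first part, your chain
$\sum_{j=s+1}^{n}c_j^{\beta}\le c_s^{\beta-1}\sum_{j=s+1}^{n}c_j\le c_s^{\beta-1}\sum_{j=1}^{s}c_j\le\sum_{j=1}^{s}c_j^{\beta}$
is the standard argument (and essentially the one in Cai--Zhang's original paper), with the $0^{0}$ ambiguity at $c_s=0$, $\beta=1$ correctly quarantined into a trivial degenerate case. For the general part, the shift $\tilde c_j=c_j+\rho/s$ on the head indices is the right reduction: it preserves the nonincreasing order (the head rises uniformly and still dominates $c_{s+1}$), it restores the budget hypothesis exactly, and the special case then gives $\sum_{j=s+1}^{n}c_j^{\beta}\le\sum_{j=1}^{s}(c_j+\rho/s)^{\beta}$. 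Your final Minkowski step also checks out: since
$s\big(\sqrt[\beta]{s^{-1}\sum_{j=1}^{s}c_j^{\beta}}+\rho/s\big)^{\beta}=\big((\sum_{j=1}^{s}c_j^{\beta})^{1/\beta}+s^{1/\beta}\rho/s\big)^{\beta}$,
the triangle inequality in $\ell_{\beta}$ applied to $(c_1,\dots,c_s)$ and the constant vector $(\rho/s,\dots,\rho/s)$ yields precisely the stated bound. No gaps.
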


Now, we have made preparations for proving Theorem \ref{QCBP-DSgeneral}.
\begin{proof}[Proof of Theorem \ref{QCBP-DSgeneral}]
Let $h=\hat{x}^{DS}-x$. We have the following tube constraint inequality
\begin{align}\label{Tubeconstraintinequality}
\|A^*Ah\|_{\infty}\leq\|A^*(A\hat{x}^{DS}-b)\|_{\infty}+\|A^*(b-Ax)\|_{\infty}\leq\eta+\eta=2\eta.
\end{align}
By Lemma \ref{ConeconstriantinequalityLemma}, we have cone constraint inequality as follows
\begin{align}\label{Coneconstraintinequality}
\|h_{-\max(s)}\|_1\leq \|h_{\max(s)}\|_1+2\|x_{-\max(s)}\|_1.
\end{align}

By
\begin{align*}
\|h\|_2=\sqrt{\|h_{\max(s)}\|_2^2+\|h_{-\max(s)}\|_2^2},
\end{align*}
we need to estimate $\|h_{\max(a)}\|_2$ and $\|h_{-\max(s)}\|_2$, respectively.

By $\|h_{\max(s)}\|_2\leq\sqrt{s/a}\|h_{\max(a)}\|_2$, it suffices to estimate $\|h_{\max(a)}\|_2$.  we consider the following identity
\begin{align}\label{e2.8}
\big|\langle Ah, Ah_{\max(a)}\rangle\big|=\big|\langle Ah_{\max(a)},Ah_{\max(a)}\rangle +\langle Ah_{-\max(a)},Ah_{\max(a)}\rangle\big|.
\end{align}
First, we give out a lower bound for (\ref{e2.8}). By
\begin{align*}
\big|\langle Ah, Ah_{\max(a)}\rangle\big|
&\geq\|Ah_{\max(a)}\|_2^2-\big|\langle Ah_{-\max(a)},Ah_{\max(a)}\rangle\big|,
\end{align*}
we need to deal with $\|Ah_{\max(s)}\|_2^2$ and $\big|\langle Ah_{-\max(a)},Ah_{\max(a)}\rangle\big|$. It follows from Lemma \ref{CCLemma} that
$$
\|Ah_{\max(a)}\|_2^2\geq(1-\mu_1(a-1))\|h_{\max(a)}\|_2^2.
$$
Suppose $h=\sum_{j=1}^n c_ju^j$, where $\{c_j\}_{j=1}^n$ are nonnegative and non-increasing, and $\{u^j\}_{j=1}^n$ are indicator vectors with different supports. Then (\ref{Coneconstraintinequality}) implies that
$$
\sum_{j=s+1}^nc_j\leq\sum_{j=1}^sc_j+2\|x_{-\max(s)}\|_1.
$$
Hence,
\begin{align*}
\|h_{-\max(a)}\|_1&=\sum_{j=a+1}^sc_j+\sum_{j=s+1}^nc_j\leq\frac{s-a}{a}\sum_{j=1}^ac_j+\sum_{j=1}^sc_j+2\|x_{-\max(s)}\|_1\\
&\leq\frac{s-a}{a}\sum_{j=1}^ac_j+\frac{s}{a}\sum_{j=1}^ac_j+2\|x_{-\max(s)}\|_1=\frac{2s-a}{a}\sum_{j=1}^ac_j+2\|x_{-\max(s)}\|_1\\
&=(2s-a)\bigg(\frac{\|h_{\max(a)}\|_1}{a}+\frac{2\|x_{-\max(s)}\|_1}{2s-a}\bigg)
=:(2s-a)\alpha,
\end{align*}
and
\begin{align*}
\|h_{-\max(a)}\|_{\infty}=c_{a+1}\leq\frac{\|h_{\max(a)}\|_1}{a}\leq\frac{\|h_{\max(a)}\|_1}{a}+\frac{2\|x_{-\max(s)}\|_1}{2s-a}=\alpha.
\end{align*}
Taking $s_1=a$ and $s_2=2s-a$, then Lemma \ref{NonsparseCC} yields
\begin{align*}
\big|\langle A&h_{-\max(a)},Ah_{\max(a)}\rangle\big|\\
&\leq\alpha\sqrt{2s-a}\mu_1(a+2s-a-1)\|h_{\max(a)}\|_2\\
&\leq\sqrt{\frac{2s-a}{a}}\mu_1(2s-1)\|h_{\max(a)}\|_2^2+\sqrt{\frac{s}{2s-a}}\mu_1(2s-1)\frac{2\|x_{-\max(s)}\|_1}{\sqrt{s}}\|h_{\max(a)}\|_2.
\end{align*}
Therefore,
\begin{align}\label{e2.9}
\big|\langle A&h, Ah_{\max(a)}\rangle\big|\nonumber\\
&\geq(1-\mu_1(a-1))\|h_{\max(a)}\|_2^2\nonumber\\
&\hspace*{12pt}-\bigg(\sqrt{\frac{2s-a}{a}}\mu_1(2s-1)\|h_{\max(a)}\|_2^2
+\sqrt{\frac{s}{2s-a}}\mu_1(2s-1)\frac{2\|x_{-\max(s)}\|_1}{\sqrt{s}}\|h_{\max(a)}\|_2\bigg)\nonumber\\
&=\bigg(1-\mu_1(a-1)-\sqrt{\frac{2s-a}{a}}\mu_1(2s-1)\bigg)\|h_{\max(a)}\|_2^2\nonumber\\
&\hspace*{12pt}-\sqrt{\frac{s}{2s-a}}\mu_1(2s-1)\frac{2\|x_{-\max(s)}\|_1}{\sqrt{s}}\|h_{\max(a)}\|_2.
\end{align}

Next, we provide an upper bound of $\big|\langle Ah, Ah_{\max(a)}\rangle\big|$. Using (\ref{Tubeconstraintinequality}), we get
\begin{align}\label{e2.10}
\big|\langle A&h, Ah_{\max(a)}\rangle\big|=\big|\langle A^*Ah, h_{\max(a)}\rangle\big|\leq\|A^*Ah\|_{\infty}\|h_{\max(a)}\|_1\leq2\eta\sqrt{a}\|\|h_{\max(a)}\|_2.
\end{align}

Combining the lower bound (\ref{e2.9}) with the upper bound (\ref{e2.10}), we get
\begin{align*}
\bigg(1&-\mu_1(a-1)-\sqrt{\frac{2s-a}{a}}\mu_1(2s-1)\bigg)\|h_{\max(a)}\|_2^2\nonumber\\
&\leq\bigg(2\sqrt{a}\eta+\sqrt{\frac{s}{2s-a}}\mu_1(2s-1)\frac{2\|x_{-\max(s)}\|_1}{\sqrt{s}}\bigg)\|h_{\max(a)}\|_2.
\end{align*}
Note that
$$
\mu_1(a-1)+\sqrt{\frac{2s-a}{a}}\mu_1(2s-1)=\mu_1(a-1)+C_{a,s}\mu_1(2s-1)<1.
$$
Therefore
\begin{align}\label{e2.11}
\|h&_{\max(a)}\|_2\nonumber\\
&\leq\frac{2\sqrt{a}}{1-\mu_1(a-1)-C_{a,s}\mu_1(2s-1)}\eta
+\frac{\sqrt{s/a}\mu_1(2s-1)}{\big(1-\mu_1(a-1)-C_{a,s}\mu_1(2s-1)\big)C_{a,s}}\frac{2\|x_{-\max(s)}\|_1}{\sqrt{s}}.
\end{align}

Now, we estimate $\|h_{-max(s)}\|_2$. Applying Lemma \ref{ApproximatingerroreEtimate} with $\rho=2\|x_{-\max(s)}\|_1$ and $\beta=2$ yields
\begin{align}\label{e2.12}
\|h_{-\max(s)}\|_2\leq\|h_{\max(s)}\|_2+\frac{2\|x_{-\max(s)}\|_1}{\sqrt{s}}.
\end{align}
It follows from (\ref{e2.12}) that
\begin{align}\label{e2.13}
\|h\|_2&=\sqrt{\|h_{\max(s)}\|_2^2+\|h_{-\max(s)}\|_2^2}
\leq\sqrt{\|h_{\max(s)}\|_2^2+\bigg(\|h_{\max(s)}\|_2+\frac{2\|x_{-\max(s)}\|_1}{\sqrt{s}}\bigg)^2}\nonumber\\
&\leq\sqrt{2}\|h_{\max(s)}\|_2+\frac{2\|x_{-\max(s)}\|_1}{\sqrt{s}}
\leq\sqrt{\frac{2s}{a}}\|h_{\max(a)}\|_2+\frac{2\|x_{-\max(s)}\|_1}{\sqrt{s}}.
\end{align}
Then substituting (\ref{e2.11}) into (\ref{e2.13}), we have
\begin{align*}
\|\hat{x}^{DS}-x\|_2&\leq\frac{2\sqrt{2}\sqrt{s}}{1-\mu_1(a-1)-C_{a,s}\mu_1(2s-1)}\eta\\
&\hspace*{12pt}+\bigg(\frac{\sqrt{2}s\mu_1(2s-1)/a}{\big(1-\mu_1(a-1)-C_{a,s}\mu_1(2s-1)\big)C_{a,s}}+1\bigg)\frac{2\|x_{-\max(s)}\|_1}{\sqrt{s}}.
\end{align*}

Next, we turn our attention to consider $\hat{x}^{\ell_2}$. Let $h=\hat{x}^{\ell_2}-x$. Note that we can replace (\ref{Tubeconstraintinequality}) by
\begin{align}\label{QCBPTubeconstraintinequality}
\|Ah\|_2\leq\|A\hat{x}^{\ell_2}-b\|_2+\|b-Ax\|_2\leq\eta+\eta=2\eta.
\end{align}
And we also can replace (\ref{e2.10}) by
\begin{align}\label{e2.14}
\big|\langle A&h, Ah_{\max(a)}\rangle\big|\leq\|Ah\|_{2}\|Ah_{\max(a)}\|_2\leq2\eta\sqrt{1+\mu_1(a-1)}\|\|h_{\max(a)}\|_2.
\end{align}
Then by a similar proof of the case $\hat{x}^{DS}$, we get
\begin{align*}
\|\hat{x}^{\ell_2}-x\|_2&\leq\frac{2\sqrt{2}\sqrt{1+\mu_1(a-1)}}{1-\mu_1(a-1)-C_{a,s}\mu_1(2s-1)}\eta\\
&\hspace*{12pt}+\bigg(\frac{\sqrt{2}s\mu_1(2s-1)/a}{\big(1-\mu_1(a-1)-C_{a,s}\mu_1(2s-1)\big)C_{a,s}}+1\bigg)\frac{2\|x_{-\max(s)}\|_1}{\sqrt{s}},
\end{align*}
which finishes our conclusion.
\end{proof}

\begin{remark}\label{QCBP-DS}
If we take $a=s$ in Theorem \ref{QCBP-DSgeneral}, then $C_{a,s}=1$. And we have a simpler condition
\begin{align}\label{SpecialQCBP-DScondition-1}
\mu_1(s-1)+\mu_1(2s-1)<1.
\end{align}
Especially, under the condition
\begin{align}\label{SpecialQCBP-DScondition-2}
\mu_1(2s-1)<\frac{1}{2}
\end{align}
or
\begin{align}\label{SpecialQCBP-DScondition-3}
\mu<\frac{1}{3s-2},
\end{align}
the signal can be stably recovered from Dantzig selector (\ref{DantzigselectorMedel}) and QCBP (\ref{QCBPmodel}).
\end{remark}

\begin{remark}\label{QCBP-DS-Compare}
Our condition (\ref{SpecialQCBP-DScondition-3}) for QCBP model (\ref{QCBPmodel}) improves the condition $\mu<1/(4s-1)$ in Donoho, Elad, and Temlyakov \cite{DET2006}. And the condition (\ref{SpecialQCBP-DScondition-1}) for QCBP model (\ref{QCBPmodel}) also improves the condition $\mu_1(2s)<1/2$ for iterative hard thresholding and the condition $\mu_1(s-1)+2\mu_1(s)<1$ for hard thresholding pursuit in \cite[Chapter 5]{FR2013}.
\end{remark}

\subsection{Stable Recovery of Lasso \label{s2.3}}
\hskip\parindent

In this subsection, we consider the stable recovery of signals through Lasso model (\ref{LassoModel}).

\begin{theorem}\label{Lassogeneral}
Assume the cumulative coherence function of the measurement matrix $A$ satisfies
\begin{align}\label{Lassocondition}
\mu_1(a-1)+\mu_1(4s-1)<\frac{1}{D_{a,s}},
\end{align}
where $D_{a,s}=\sqrt{(4s-a)/a}$ and $1\leq a\leq s$, and $\|A^*z\|_{\infty}\leq\lambda/2$.  Let $\hat{x}^{L}$ be the solution to the Lasso (\ref{LassoModel}), then
\begin{align*}
\|\hat{x}^{L}-x\|_2
&\leq\frac{9(1+\mu_1(a-1))}{4\Big(1-D_{a,s}\big(\mu_1(4s-1)+\mu_1(a-1)\big)\Big)\mu_1(a-1)}\sqrt{s}\lambda\\
&\hspace*{12pt}+\bigg(\frac{8}{3\Big(1-D_{a,s}\big(\mu_1(4s-1)+\mu_1(a-1)\big)\Big)D_{a,s}}+\frac{1}{2}\bigg)
\frac{2\|x_{-\max(s)}\|_1}{\sqrt{s}}.
\end{align*}
\end{theorem}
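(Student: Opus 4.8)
The plan is to follow the architecture of the proof of Theorem~\ref{QCBP-DSgeneral}: writing $h=\hat{x}^{L}-x$, I would split $\|h\|_2$ into the head part $\|h_{\max(a)}\|_2$ and the tail part $\|h_{-\max(s)}\|_2$, control the head through the identity $\langle Ah,Ah_{\max(a)}\rangle=\|Ah_{\max(a)}\|_2^2+\langle Ah_{-\max(a)},Ah_{\max(a)}\rangle$, and control the tail through Lemma~\ref{ApproximatingerroreEtimate}. The essential difference from the Dantzig selector case is that both the cone constraint and the ``tube'' information must now be extracted from the minimality of the Lasso functional rather than from the feasibility constraints, and this is precisely what forces the larger index $4s-1$ and the factor $D_{a,s}$.

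First I would record the basic inequality coming from the optimality of $\hat{x}^{L}$: since $\tfrac12\|A\hat{x}^{L}-b\|_2^2+\lambda\|\hat{x}^{L}\|_1\le\tfrac12\|Ax-b\|_2^2+\lambda\|x\|_1$ and $b=Ax+z$, expanding $\|Ah-z\|_2^2$ gives $\tfrac12\|Ah\|_2^2\le\langle Ah,z\rangle+\lambda(\|x\|_1-\|\hat{x}^{L}\|_1)$. Using $\langle Ah,z\rangle=\langle h,A^*z\rangle\le\tfrac{\lambda}{2}\|h\|_1$ together with the elementary bound $\|x\|_1-\|\hat{x}^{L}\|_1\le\|h_{\max(s)}\|_1-\|h_{-\max(s)}\|_1+2\|x_{-\max(s)}\|_1$, the nonnegativity of $\|Ah\|_2^2$ yields the Lasso cone constraint $\|h_{-\max(s)}\|_1\le 3\|h_{\max(s)}\|_1+4\|x_{-\max(s)}\|_1$, while keeping the left-hand side produces the prediction bound $\|Ah\|_2^2\le 3\lambda\|h_{\max(s)}\|_1+4\lambda\|x_{-\max(s)}\|_1$. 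The coefficient $3$ here (versus $1$ in Lemma~\ref{ConeconstriantinequalityLemma}) is exactly what will replace $2s-1$ by $4s-1$ in the sequel.

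Next I would bound the head. For the lower bound on $|\langle Ah,Ah_{\max(a)}\rangle|$, Lemma~\ref{CCLemma} gives $\|Ah_{\max(a)}\|_2^2\ge(1-\mu_1(a-1))\|h_{\max(a)}\|_2^2$, and for the cross term I would expand $h=\sum_j c_ju^j$ in nonincreasing indicator coordinates and feed the cone constraint into the estimate $\|h_{-\max(a)}\|_1\le(4s-a)\alpha$, $\|h_{-\max(a)}\|_\infty\le\alpha$ with $\alpha=\|h_{\max(a)}\|_1/a+4\|x_{-\max(s)}\|_1/(4s-a)$; Lemma~\ref{NonsparseCC} with $s_1=a$, $s_2=4s-a$ then produces the term $D_{a,s}\mu_1(4s-1)\|h_{\max(a)}\|_2^2$ together with an $\|x_{-\max(s)}\|_1$ remainder. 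For the upper bound I would not use an $\ell_\infty$--$\ell_1$ tube estimate but rather the Cauchy--Schwarz form $|\langle Ah,Ah_{\max(a)}\rangle|\le\|Ah\|_2\|Ah_{\max(a)}\|_2\le\sqrt{1+\mu_1(a-1)}\,\|Ah\|_2\|h_{\max(a)}\|_2$, exactly as in~(\ref{e2.14}); this is the source of the factor $1+\mu_1(a-1)$ in the conclusion. Condition~(\ref{Lassocondition}) guarantees $1-\mu_1(a-1)-D_{a,s}\mu_1(4s-1)>0$, so dividing by $\|h_{\max(a)}\|_2$ expresses $\|h_{\max(a)}\|_2$ linearly in $\|Ah\|_2$ and $\|x_{-\max(s)}\|_1$.

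The main obstacle is the final coupling. Having bounded $\|h_{\max(a)}\|_2$ in terms of $\|Ah\|_2$, I would substitute $\|h_{\max(s)}\|_1\le\sqrt{s}\,\|h_{\max(s)}\|_2\le\sqrt{s}\cdot\sqrt{s/a}\,\|h_{\max(a)}\|_2$ into the prediction bound $\|Ah\|_2^2\le 3\lambda\|h_{\max(s)}\|_1+4\lambda\|x_{-\max(s)}\|_1$; this closes the loop, but only through a quadratic inequality in $\|Ah\|_2$ (equivalently in $\|h_{\max(a)}\|_2$), whose resolution is where the constants $\tfrac94$, $8/3$ and the $1/\mu_1(a-1)$ factor are generated and where I expect essentially all the bookkeeping difficulty to lie. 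Once $\|h_{\max(a)}\|_2$ is bounded by the stated $\sqrt{s}\lambda$ and $\|x_{-\max(s)}\|_1/\sqrt{s}$ terms, Lemma~\ref{ApproximatingerroreEtimate} with $\beta=2$ and $\rho=2\|h_{\max(s)}\|_1+4\|x_{-\max(s)}\|_1$ controls $\|h_{-\max(s)}\|_2$ by $3\|h_{\max(s)}\|_2+4\|x_{-\max(s)}\|_1/\sqrt{s}$, and assembling $\|h\|_2\le\|h_{\max(s)}\|_2+\|h_{-\max(s)}\|_2$ through $\|h_{\max(s)}\|_2\le\sqrt{s/a}\,\|h_{\max(a)}\|_2$ finishes the estimate.
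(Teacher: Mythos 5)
Your overall architecture coincides with the paper's: the same head/tail split, the same lower bound on $|\langle Ah,Ah_{\max(a)}\rangle|$ via Lemma \ref{CCLemma} and Lemma \ref{NonsparseCC} with $s_1=a$, $s_2=4s-a$, the same Cauchy--Schwarz upper bound $|\langle Ah,Ah_{\max(a)}\rangle|\le\sqrt{1+\mu_1(a-1)}\,\|Ah\|_2\|h_{\max(a)}\|_2$, and the same two consequences of Lasso optimality (cone constraint and prediction bound, quoted in the paper as Lemma \ref{LassoConeconstriantinequalityLemma} and re-derived correctly by you). The genuine divergence is the endgame, and it is exactly there that your plan fails to prove the statement as written. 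You keep $\|Ah\|_2$ as an unknown and close the loop by solving a quadratic inequality in $\|Ah\|_2$. Carried out, this yields a bound whose $\sqrt{s}\lambda$-term carries a factor of order $(1+\mu_1(a-1))\big(1-\mu_1(a-1)-D_{a,s}\mu_1(4s-1)\big)^{-2}$, together with cross terms of order $\sqrt{\lambda\|x_{-\max(s)}\|_1}$. That is a perfectly respectable stable-recovery estimate (it even remains finite as $\mu_1(a-1)\to 0$ and needs only the weaker positivity $1-\mu_1(a-1)-D_{a,s}\mu_1(4s-1)>0$), but it is not the stated bound: no resolution of that quadratic can generate the factor $1/\mu_1(a-1)$ or the denominator $1-D_{a,s}\big(\mu_1(4s-1)+\mu_1(a-1)\big)$ appearing in the theorem, and the two bounds are in general incomparable, so you cannot recover the statement by weakening afterwards.

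What the paper does instead is never to solve a quadratic. It substitutes the prediction bound $\|Ah\|_2^2\le 3\lambda\|h_{\max(s)}\|_1+4\lambda\|x_{-\max(s)}\|_1$ directly inside the Cauchy--Schwarz estimate and linearizes the resulting square root via $\sqrt{|uv|}\le(|u|+|v|)/2$ with a free splitting parameter $\varepsilon$:
\begin{align*}
\big|\langle Ah,Ah_{\max(a)}\rangle\big|\le\frac{\sqrt{1+\mu_1(a-1)}}{2}
\Big(3\varepsilon\sqrt{\tfrac{s}{a}}\,\|h_{\max(a)}\|_2+4\varepsilon\tfrac{\|x_{-\max(s)}\|_1}{\sqrt{s}}+\tfrac{\sqrt{s}\,\lambda}{\varepsilon}\Big)\|h_{\max(a)}\|_2 .
\end{align*}
The parameter is then tuned to $\varepsilon=\frac{2\sqrt{a}\,\mu_1(a-1)}{3\sqrt{(1+\mu_1(a-1))s}}\big(D_{a,s}-1\big)$, so that the quadratic term produced by this step has coefficient exactly $\big(D_{a,s}-1\big)\mu_1(a-1)$; moved to the left-hand side it merges with the lower bound's coefficient $1-\mu_1(a-1)-D_{a,s}\mu_1(4s-1)$ into precisely $1-D_{a,s}\big(\mu_1(a-1)+\mu_1(4s-1)\big)$. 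This is why the hypothesis is phrased as (\ref{Lassocondition}), and why the $\sqrt{s}\lambda$ term carries the factor $\frac{1+\mu_1(a-1)}{\mu_1(a-1)}$ --- it is the $1/\varepsilon$ of the splitting; the numerical constants then come from elementary bounds such as $\sqrt{4s-a}-\sqrt{a}\ge(\sqrt{3}-1)\sqrt{s}$. A second, smaller discrepancy: your tail assembly ($\|h\|_2\le\|h_{\max(s)}\|_2+\|h_{-\max(s)}\|_2$ with Lemma \ref{ApproximatingerroreEtimate}) gives $4\|h_{\max(s)}\|_2+4\|x_{-\max(s)}\|_1/\sqrt{s}$, whereas the stated coefficient $\tfrac12$ on the $2\|x_{-\max(s)}\|_1/\sqrt{s}$ term requires the paper's sharper route $\|h_{-\max(s)}\|_2\le\sqrt{\|h_{-\max(s)}\|_1\|h_{-\max(s)}\|_\infty}$ followed by the Pythagorean sum, which yields $\|h\|_2\le 2\|h_{\max(s)}\|_2+\|x_{-\max(s)}\|_1/\sqrt{s}$. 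To prove the theorem as stated, replace your quadratic-resolution step by this parameterized splitting and adopt the sharper tail estimate.
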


Before giving out the proof, we first recall an auxiliary lemma. It is a modified cone constraint inequality (see, e.g., \cite[Page 2356]{CP2011} for matrix case and \cite[Lemma 2]{SHB2015} for vector with frame).

\begin{lemma}\label{LassoConeconstriantinequalityLemma}
If the noisy measurements $b=Ax+z$ are observed with noise level $\|A^*z\|_{\infty}\leq\lambda/2$, then the minimization solution $\hat{x}^{L}$ of (\ref{LassoModel}) satisfies
\begin{align*}
\|Ah\|_2^2+\lambda\|h_{-\max(s)}\|_1\leq 3\lambda\|h_{\max(s)}\|_1+4\lambda\|x_{-\max(s)}\|_1,
\end{align*}
where $h=\hat{x}^{L}-x$. In particular,
\begin{align*}
\|h_{-\max(s)}\|_1\leq 3\|h_{\max(s)}\|_1+4\|x_{-\max(s)}\|_1
\end{align*}
and
\begin{align*}
\|Ah\|_2^2\leq3\lambda\|h_{\max(s)}\|_1+4\lambda\|x_{-\max(s)}\|_1.
\end{align*}
\end{lemma}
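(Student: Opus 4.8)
The plan is to exploit the optimality of $\hat{x}^{L}$ directly, as is standard for Lasso-type estimators, and then split the resulting $\ell_1$ discrepancy along the support of the $s$ largest entries of $x$. First I would set $h=\hat{x}^{L}-x$ and record that, since $b=Ax+z$, we have $A\hat{x}^{L}-b=Ah-z$ while $Ax-b=-z$. Because $\hat{x}^{L}$ minimizes the Lasso objective in (\ref{LassoModel}), comparing its value at $\hat{x}^{L}$ with its value at $x$ gives
\begin{align*}
\frac{1}{2}\|Ah-z\|_2^2+\lambda\|\hat{x}^{L}\|_1\leq\frac{1}{2}\|z\|_2^2+\lambda\|x\|_1.
\end{align*}
Expanding $\|Ah-z\|_2^2=\|Ah\|_2^2-2\langle Ah,z\rangle+\|z\|_2^2$ and cancelling $\frac{1}{2}\|z\|_2^2$ from both sides reduces this to
\begin{align*}
\frac{1}{2}\|Ah\|_2^2\leq\langle Ah,z\rangle+\lambda\big(\|x\|_1-\|\hat{x}^{L}\|_1\big).
\end{align*}

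Next I would control the two terms on the right. For the cross term, writing $\langle Ah,z\rangle=\langle h,A^*z\rangle$ and applying H\"older's inequality together with the noise hypothesis $\|A^*z\|_\infty\leq\lambda/2$ yields $\langle Ah,z\rangle\leq\frac{\lambda}{2}\|h\|_1=\frac{\lambda}{2}\big(\|h_{\max(s)}\|_1+\|h_{-\max(s)}\|_1\big)$. For the $\ell_1$ difference, let $S$ be the index set of $x_{\max(s)}$, so that $x_{\max(s)}=x_S$ and $x_{-\max(s)}=x_{S^c}$. Splitting $\|\hat{x}^{L}\|_1=\|(x+h)_S\|_1+\|(x+h)_{S^c}\|_1$ and applying the reverse triangle inequality on each block gives $\|\hat{x}^{L}\|_1\geq\|x_S\|_1-\|h_{\max(s)}\|_1+\|h_{-\max(s)}\|_1-\|x_{S^c}\|_1$, hence
\begin{align*}
\|x\|_1-\|\hat{x}^{L}\|_1\leq\|h_{\max(s)}\|_1-\|h_{-\max(s)}\|_1+2\|x_{-\max(s)}\|_1.
\end{align*}

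Substituting both bounds, the coefficient of $\|h_{\max(s)}\|_1$ consolidates to $\frac{\lambda}{2}+\lambda=\frac{3\lambda}{2}$ and that of $\|h_{-\max(s)}\|_1$ to $\frac{\lambda}{2}-\lambda=-\frac{\lambda}{2}$, giving
\begin{align*}
\frac{1}{2}\|Ah\|_2^2\leq\frac{3\lambda}{2}\|h_{\max(s)}\|_1-\frac{\lambda}{2}\|h_{-\max(s)}\|_1+2\lambda\|x_{-\max(s)}\|_1.
\end{align*}
Multiplying through by $2$ and moving the negative term to the left produces exactly the main claim. The two ``in particular'' inequalities then follow by discarding a nonnegative term: dropping $\|Ah\|_2^2\geq0$ and dividing by $\lambda$ gives the cone condition $\|h_{-\max(s)}\|_1\leq3\|h_{\max(s)}\|_1+4\|x_{-\max(s)}\|_1$, while dropping $\lambda\|h_{-\max(s)}\|_1\geq0$ gives the prediction bound on $\|Ah\|_2^2$.

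The proof is largely routine; the one point demanding care is the final bookkeeping, where the favorable sign of the $\|h_{-\max(s)}\|_1$ contribution must be preserved. The hard part will be ensuring that the $+\frac{\lambda}{2}\|h_{-\max(s)}\|_1$ coming from the H\"older bound on the cross term does not swamp the $-\lambda\|h_{-\max(s)}\|_1$ arising from the $\ell_1$ difference, so that after doubling the coefficient of $\|h_{-\max(s)}\|_1$ lands as $+1$ on the left of the inequality (yielding the cone constraint) rather than weakening it, and simultaneously the $\|h_{\max(s)}\|_1$ coefficients combine to precisely $3$. This balance is exactly what the threshold $\|A^*z\|_\infty\leq\lambda/2$ is calibrated to secure.
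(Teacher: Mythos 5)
Your proof is correct, and it is essentially the argument the paper implicitly relies on: the paper never proves this lemma itself but cites \cite[Page 2356]{CP2011} and \cite[Lemma 2]{SHB2015}, whose proofs are exactly your standard basic-inequality route (compare the Lasso objective at $\hat{x}^{L}$ and at $x$, bound the cross term $\langle Ah,z\rangle$ by H\"older using $\|A^*z\|_\infty\leq\lambda/2$, and split the $\ell_1$ difference over the support of $x_{\max(s)}$). One small point of rigor: in the paper's notation $h_{\max(s)}$ is $h$ restricted to its \emph{own} $s$ largest entries, whereas your reverse-triangle step naturally yields $h_S$ and $h_{S^c}$ with $S=\text{supp}(x_{\max(s)})$; the passage between the two is harmless because $\|h_S\|_1\leq\|h_{\max(s)}\|_1$ and $\|h_{S^c}\|_1\geq\|h_{-\max(s)}\|_1$, both inequalities pointing in the favorable direction, but this one-line observation should be stated.
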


\begin{proof}[Proof of Theorem \ref{Lassogeneral}]
Let $h=\hat{x}^{L}-x$. By Lemma \ref{LassoConeconstriantinequalityLemma}, we get a modified cone constraint inequality as follows
\begin{align}\label{LassoConeconstriantinequality}
\|h_{-\max(s)}\|_1\leq 3\|h_{\max(s)}\|_1+4\|x_{-\max(s)}\|_1,
\end{align}
and an estimate of $\|Ah\|_2^2$ as follows
\begin{align}\label{LassoTubeconstriantinequality}
\|Ah\|_2^2\leq3\lambda\|h_{\max(s)}\|_1+4\lambda\|x_{-\max(s)}\|_1.
\end{align}

We write
\begin{align*}
\|h\|_2=\sqrt{\|h_{\max(s)}\|_2^2+\|h_{-\max(s)}\|_2^2}.
\end{align*}
Then we estimate $\|h_{\max(s)}\|_2$ and $\|h_{-\max(s)}\|_2$, respectively.

Note that $\|h_{\max(s)}\|_2\leq\sqrt{s/a}\|h_{\max(a)}\|_2$. We also only need to estimate $\|h_{\max(a)}\|_2$.  we still consider the identity (\ref{e2.8}).
It follows from Lemma \ref{CCLemma} that
\begin{align*}
|\langle Ah, Ah_{\max(a)} \rangle|&\geq\|Ah_{\max(a)}\|_2^2-|\langle Ah_{\max(a)}, Ah_{-\max(a)}\rangle|\\
&\geq\big(1-\mu_1(a-1)\big)\|h_{\max(a)}\|_2^2-|\langle Ah_{\max(a)}, Ah_{-\max(a)}\rangle|.
\end{align*}
Let $h=\sum_{j=1}^nc_ju^j$, where $\{c_j\}_{j=1}^n$ is a non-negative and non-increasing sequence and $\{u^j\}_{j=1}^n$ are indicator vectors with different supports in $\mathbb{R}^n$. Then by (\ref{LassoConeconstriantinequality}), we have
$$
\sum_{j=s+1}^nc_j\leq 3\sum_{j=1}^sc_j+4\|x_{-\max(s)}\|_1.
$$
Therefore,
\begin{align*}
\|h_{-\max(a)}\|_1&=\sum_{j=a+1}^sc_j+\sum_{j=s+1}^nc_j\leq\frac{s-a}{a}\sum_{j=1}^ac_j+3\sum_{j=1}^sc_j+4\|x_{-\max(s)}\|_1\\
&\leq\frac{s-a}{a}\sum_{j=1}^ac_j+\frac{3s}{a}\sum_{j=1}^ac_j+4\|x_{-\max(s)}\|_1=\frac{4s-a}{a}\sum_{j=1}^ac_j+4\|x_{-\max(s)}\|_1\\
&=(4s-a)\bigg(\frac{\|h_{\max(a)}\|_1}{a}+\frac{4\|x_{-\max(s)}\|_1}{4s-a}\bigg)=:(4s-a)\alpha.
\end{align*}
and
\begin{align*}
\|h_{-\max(a)}\|_{\infty}=c_{a+1}\leq\frac{\|h_{\max(a)}\|_1}{a}\leq\frac{\sum_{j=1}^ac_j}{a}+\frac{4\|x_{-\max(s)}\|_1}{4s-a}=\alpha.
\end{align*}
Taking $s_1=a$ and $s_2=4s-a$ in Lemma \ref{NonsparseCC}, we get
\begin{align*}
|\langle A&h_{\max(a)}, Ah_{-\max(a)}\rangle|\\
&\leq\alpha\sqrt{4s-a}\mu_1(4s-a+a-1)\|h_{\max(a)}\|_2\\
&\leq\sqrt{\frac{4s-a}{a}}\mu_1(4s-1)\|h_{\max(a)}\|_2^2+\sqrt{\frac{s}{4s-a}}\mu_1(4s-1)\frac{4\|x_{-\max(s)}\|_1}{\sqrt{s}}\|h_{\max(a)}\|_2.
\end{align*}
Therefore, we can get a lower bound estimate of $|\langle Ah, Ah_{\max(a)} \rangle|$ as follows
\begin{align}\label{e2.15}
|\langle A&h, Ah_{\max(a)} \rangle|\nonumber\\
&\geq\big(1-\mu_1(a-1)\big)\|h_{\max(a)}\|_2^2\nonumber\\
&\hspace*{12pt}-\bigg(\sqrt{\frac{4s-a}{a}}\mu_1(4s-1)\|h_{\max(a)}\|_2^2
+\sqrt{\frac{s}{4s-a}}\mu_1(4s-1)\frac{4\|x_{-\max(s)}\|_1}{\sqrt{s}}\|h_{\max(a)}\|_2\bigg)\nonumber\\
&=\bigg(1-\mu_1(a-1)-\sqrt{\frac{4s-a}{a}}\mu_1(4s-1)\bigg)\|h_{\max(a)}\|_2^2\nonumber\\
&\hspace*{12pt}-\sqrt{\frac{s}{4s-a}}\mu_1(4s-1)\frac{4\|x_{-\max(s)}\|_1}{\sqrt{s}}\|h_{\max(a)}\|_2.
\end{align}
Next, we establish  an upper bound of $|\langle Ah, Ah_{\max(a)} \rangle|$. Using (\ref{LassoConeconstriantinequality}) and Lemma \ref{CCLemma}, we obtain
\begin{align*}
\big|\langle Ah&, Ah_{\max(a)}\rangle\big|\leq\|Ah\|_{2}\|Ah_{\max(a)}\|_{2}\nonumber\\
&\leq\sqrt{3\lambda\|h_{\max(s)}\|_1+4\lambda\|x_{-\max(s)}\|_1}\sqrt{(1+\mu_1(a-1))\|h_{\max(a)}\|_2^2}\nonumber\\
&\leq\sqrt{\frac{3\lambda s}{a}\|h_{\max(a)}\|_1+4\lambda\|x_{-\max(s)}\|_1}\sqrt{(1+\mu_1(a-1))\|h_{\max(a)}\|_2^2}\nonumber\\
&\leq\sqrt{\frac{1}{\varepsilon}\lambda\sqrt{s}\bigg(3\varepsilon\sqrt{\frac{s}{a}}\|h_{\max(a)}\|_2
+4\varepsilon\frac{\|x_{-\max(s)}\|_1}{\sqrt{s}}\bigg)}
\sqrt{1+\mu_1(a-1)}\|h_{\max(s)}\|_2,
\end{align*}
where $\varepsilon>0$ is to be determined. Then by the elementary inequality $\sqrt{|a||b|}\leq(|a|+|b|)/2$, we have that
\begin{align*}
\big|\langle Ah, Ah_{\max(a)}\rangle\big|
&\leq\frac{3\varepsilon\sqrt{\frac{s}{a}}\|h_{\max(a)}\|_2+4\varepsilon\frac{\|x_{-\max(s)}\|_1}{\sqrt{s}}+\frac{1}{\varepsilon}\sqrt{s}\lambda}{2}
\sqrt{1+\mu_1(a-1)}\|h_{\max(s)}\|_2\\
&=\frac{3\varepsilon\sqrt{\big(1+\mu_1(a-1)\big)s}}{2\sqrt{a}}\|h_{\max(a)}\|_2^2\\
&\hspace*{12pt}+\bigg(\varepsilon\sqrt{1+\mu_1(a-1)}\frac{2\|x_{-\max(s)}\|_1}{\sqrt{s}}
+\frac{\sqrt{1+\mu_1(a-1)}}{2\varepsilon}\sqrt{s}\lambda\bigg)
\|h_{\max(a)}\|_2.
\end{align*}
Taking
$$
\varepsilon=\frac{2\sqrt{a}\mu_1(a-1)}{3\sqrt{\big(1+\mu_1(a-1)\big)s}}\bigg(\sqrt{\frac{4s-a}{a}}-1\bigg),
$$
then we have
\begin{align}\label{e2.16}
\big|\langle A&h, Ah_{\max(a)}\rangle\big|\nonumber\\
&\leq\bigg(\sqrt{\frac{4s-a}{a}}-1\bigg)\mu_1(a-1)\|h_{\max(a)}\|_2^2\nonumber\\
&\hspace*{12pt}+\bigg(\frac{2\big(\sqrt{4s-a}-\sqrt{a}\big)\mu_1(a-1)}{3\sqrt{s}}\frac{2\|x_{-\max(s)}\|_1}{\sqrt{s}}
+\frac{3\sqrt{s}(1+\mu_1(a-1))}{4\big(\sqrt{4s-a}-\sqrt{a}\big)\mu_1(a-1)}\sqrt{s}\lambda\bigg)\|h_{\max(s)}\|_2.
\end{align}
Combining the lower bound (\ref{e2.15}) with the upper bound (\ref{e2.16}), we get
\begin{align*}
\bigg(1&-\sqrt{\frac{4s-a}{a}}\big(\mu_1(4s-1)+\mu_1(a-1)\big)\bigg)\|h_{\max(a)}\|_2^2\\
&\leq\bigg(\frac{2}{3}\frac{\sqrt{4s-a}-\sqrt{a}}{\sqrt{s}}\mu_1(a-1)
+2\sqrt{\frac{s}{4s-a}}\mu_1(4s-1)\bigg)\frac{2\|x_{-\max(s)}\|_1}{\sqrt{s}}\|h_{\max(s)}\|_2\\
&\hspace*{12pt}+\frac{3\sqrt{s}(1+\mu_1(a-1))}{4\big(\sqrt{4s-a}-\sqrt{a}\big)\mu_1(a-1)}\sqrt{s}\lambda\|h_{\max(s)}\|_2\\
&\leq\bigg(\frac{4}{3}\mu_1(a-1)
+\frac{2}{\sqrt{3}}\mu_1(4s-1)\bigg)\frac{2\|x_{-\max(s)}\|_1}{\sqrt{s}}\|h_{\max(s)}\|_2\\
&\hspace*{12pt}+\frac{9(1+\mu_1(a-1))}{8\mu_1(a-1)}\sqrt{s}\lambda\|h_{\max(s)}\|_2\\
&\leq\frac{4}{3}\big(\mu_1(a-1)+\mu_1(4s-1)\big)\frac{2\|x_{-\max(s)}\|_1}{\sqrt{s}}\|h_{\max(s)}\|_2
+\frac{9(1+\mu_1(a-1))}{8\mu_1(a-1)}\sqrt{s}\lambda\|h_{\max(s)}\|_2\\
\end{align*}
Note that
$$
\mu_1(a-1)+\mu_1(4s-1)<\sqrt{\frac{a}{4s-a}}=\frac{1}{D_{a,s}}.
$$
Therefore
\begin{align}\label{e2.17}
\|h_{\max(a)}\|_2
&\leq\frac{4}{3\Big(1-D_{a,s}\big(\mu_1(4s-1)+\mu_1(a-1)\big)\Big)D_{a,s}}
\frac{2\|x_{-\max(s)}\|_1}{\sqrt{s}}\nonumber\\
&\hspace*{12pt}+\frac{9(1+\mu_1(a-1))}{8\Big(1-D_{a,s}\big(\mu_1(4s-1)+\mu_1(a-1)\big)\Big)\mu_1(a-1)}
\sqrt{s}\lambda.
\end{align}

Now, we estimate $\|h_{-max(s)}\|_2$. Using (\ref{LassoConeconstriantinequality}), we have
\begin{align}\label{e2.18}
&\|h_{-\max(s)}\|_2\leq\sqrt{\|h_{-max(s)}\|_1\|h_{-\max(s)}\|_{\infty}}\nonumber\\
&\leq\sqrt{\big(3\|h_{\max(s)}\|_1+4\|x_{-\max(s)}\|_1\big)\frac{\|h_{\max(s)}\|_1}{s}}\nonumber\\
&\leq\sqrt{3\|h_{\max(s)}\|_2^2+\frac{4\|x_{-\max(s)}\|_1}{\sqrt{s}}\|h_{\max(s)}\|_2}.
\end{align}

It follows from (\ref{e2.18}) that
\begin{align}\label{e2.19}
\|h\|_2&\leq\sqrt{\|h_{\max(s)}\|_2^2+3\|h_{\max(s)}\|_2^2+\frac{4\|x_{-\max(s)}\|_1}{\sqrt{s}}\|h_{\max(s)}\|_2}\nonumber\\
&\leq2\|h_{\max(s)}\|_2+\frac{1}{2}\frac{2\|x_{-\max(s)}\|_1}{\sqrt{s}}.
\end{align}
Then substituting (\ref{e2.17}) into (\ref{e2.19}), we have
\begin{align*}
\|\hat{x}^{L}-x\|_2
&\leq\frac{9(1+\mu_1(a-1))}{4\Big(1-D_{a,s}\big(\mu_1(4s-1)+\mu_1(a-1)\big)\Big)\mu_1(a-1)}\sqrt{s}\lambda\\
&\hspace*{12pt}+\bigg(\frac{8}{3\Big(1-D_{a,s}\big(\mu_1(4s-1)+\mu_1(a-1)\big)\Big)D_{a,s}}+\frac{1}{2}\bigg)
\frac{2\|x_{-\max(s)}\|_1}{\sqrt{s}}.
\end{align*}
Therefore we finish the proof.
\end{proof}

\begin{remark}\label{Lasso}
Taking $a=s$ in Theorem \ref{Lassogeneral}, we can get a simpler condition
\begin{align}\label{SpecialLassocondition-1}
\mu_1(s-1)+\mu_1(4s-1)<\frac{1}{\sqrt{3}}.
\end{align}
Especially, under the condition
\begin{align}\label{SpecialLassocondition-2}
\mu_1(4s-1)<\frac{1}{2\sqrt{3}}
\end{align}
or
\begin{align}\label{SpecialLassocondition-3}
\mu<\frac{1}{\sqrt{3}(5s-2)},
\end{align}
the signal can be stable recovered from Lasso (\ref{LassoModel}).
\end{remark}

\begin{remark}\label{FirstCoherenceLasso}
Which should be pointed out is that it is the first time to give the coherence and cumulative coherence analysis for Lasso model (\ref{LassoModel}). And the condition (\ref{SpecialLassocondition-1}) may be improved further.
\end{remark}

\section{Prediction Loss $\|A\hat{x}^{DS}-Ax\|_2^2$ and $\|A\hat{x}^{L}-Ax\|_2^2$ \label{s3}}
\hskip\parindent

Because Lasso estimator and Dantzig selector exhibit similar behavior, we expect that the prediction loss $\|A\hat{x}^{DS}-Ax\|_2^2$ and $\|A\hat{x}^{L}-Ax\|_2^2$ is close when the number of nonzero components of the Lasso or Dantzig selector is small as compared to the same size. This question was first researched by Bickel, Ritov and Tsybakov \cite{BRT2009} under the RE-condition (see Section \ref{s5}).
And Xia and Li \cite{XL2016} also estimated the closeness of the prediction loss in the framework of robust null space property.
In this subsection, we estimate the closeness of this prediction loss in the framework of cumulative coherence.

We consider the Gaussian noise model
\begin{align}\label{Gaussiannoisemodel}
b=Ax+z,
\end{align}
where the components $z_i$ of $z$ are i.i.d. random variables with $z_i\sim \mathcal{N}(0,\sigma^2)$.
We shall assume that the noise level $\sigma$ is known.

\begin{theorem}\label{Closenessofpredictionloss}
Suppose Gaussian noise model (\ref{Gaussiannoisemodel}).  Assume the cumulative coherence function of the measurement matrix $A$ satisfies
$$
\mu_1(s-1)<1.
$$
Let $\hat{x}^{L}$ be the solution of Lasso model (\ref{LassoModel}) with $\lambda=2\sigma\sqrt{2\log n}$, and $\hat{x}^{DS}$ be the solution of Dantzig selector model (\ref{DantzigselectorMedel}) with $\eta=\lambda$. Then with probability at least
$$
1-\frac{1}{2\sqrt{\pi\log n}},
$$
we have
\begin{align*}
\big|\|A&\hat{x}^{DS}-x\|_2^2-\|A\hat{x}^{L}-Ax\|_2^2\big|
\leq8\bigg(\frac{9}{1-\mu_1(s-1)}+\frac{9}{4}\bigg)\big(\log n\big)s\sigma^2+\Bigg(\frac{2\|\hat{x}^{L}_{-\max(s)}\|_1}{\sqrt{s}}\Bigg)^2.
\end{align*}
\end{theorem}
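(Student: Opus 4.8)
The plan is to follow the Bickel--Ritov--Tsybakov strategy of exploiting the mutual feasibility of the two estimators, and to treat $\hat{x}^{L}$ itself (rather than the true $x$) as the reference signal, which is exactly why the right-hand side carries $\|\hat{x}^{L}_{-\max(s)}\|_1$ in place of $\|x_{-\max(s)}\|_1$. First I would dispose of the stochastic part. Since the columns of $A$ are $\ell_2$-normalized, each coordinate $(A^*z)_i$ is $\mathcal{N}(0,\sigma^2)$, so the Mills-ratio tail bound for the normal law together with a union bound over the $n$ columns shows that the event $E=\{\|A^*z\|_\infty\le\lambda/2\}$, with $\lambda/2=\sigma\sqrt{2\log n}$, holds with probability at least $1-\tfrac{1}{2\sqrt{\pi\log n}}$. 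All remaining estimates are carried out on $E$, on which two structural facts hold: the subgradient (KKT) condition for the Lasso gives $\|A^*(A\hat{x}^{L}-b)\|_\infty\le\lambda=\eta$, so $\hat{x}^{L}$ is feasible for the Dantzig selector (\ref{DantzigselectorMedel}) and hence $\|\hat{x}^{DS}\|_1\le\|\hat{x}^{L}\|_1$; and both residuals obey $\|A^*(A\hat{x}^{DS}-b)\|_\infty\le\eta$ and $\|A^*(A\hat{x}^{L}-b)\|_\infty\le\lambda$.

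The second step is an algebraic reduction. Writing $h^{DS}=\hat{x}^{DS}-x$, $h^{L}=\hat{x}^{L}-x$ and $\Delta=\hat{x}^{DS}-\hat{x}^{L}=h^{DS}-h^{L}$, the polarization identity and $b-Ax=z$ give
\begin{align*}
\|Ah^{DS}\|_2^2-\|Ah^{L}\|_2^2=\langle A\Delta,\,A(h^{DS}+h^{L})\rangle,\qquad A(h^{DS}+h^{L})=(A\hat{x}^{DS}-b)+(A\hat{x}^{L}-b)+2z.
\end{align*}
Distributing as $\langle\Delta,A^*(\cdot)\rangle$ and applying H\"older with the three $\ell_\infty$ bounds above yields the clean reduction $\big|\|Ah^{DS}\|_2^2-\|Ah^{L}\|_2^2\big|\le(\eta+\lambda+2\cdot\lambda/2)\|\Delta\|_1=3\lambda\|\Delta\|_1$, so the entire theorem collapses to controlling $\lambda\|\Delta\|_1$. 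From $\|\hat{x}^{DS}\|_1\le\|\hat{x}^{L}\|_1$, the usual cone argument with $\hat{x}^{L}$ as reference produces $\|\Delta_{-\max(s)}\|_1\le\|\Delta_{\max(s)}\|_1+2\|\hat{x}^{L}_{-\max(s)}\|_1$, whence $\|\Delta\|_1\le 2\sqrt{s}\,\|\Delta_{\max(s)}\|_2+2\|\hat{x}^{L}_{-\max(s)}\|_1$, while the tube bound for $\Delta$ is $\|A^*A\Delta\|_\infty\le\eta+\lambda=2\lambda$.

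It then remains to estimate $\|\Delta_{\max(s)}\|_2$, and this is where $\mu_1(s-1)<1$ enters. The tube bound gives the upper estimate $\|A\Delta\|_2^2=\langle A^*A\Delta,\Delta\rangle\le2\lambda\|\Delta\|_1$, while Lemma \ref{CCLemma} controls the $s$-sparse head through $\|A\Delta_{\max(s)}\|_2^2\ge(1-\mu_1(s-1))\|\Delta_{\max(s)}\|_2^2$. Promoting the latter to a genuine restricted-eigenvalue-type lower bound on the full $\|A\Delta\|_2^2$ and combining it with the upper bound gives a quadratic inequality in $\|\Delta_{\max(s)}\|_2$ whose resolution yields $\|\Delta_{\max(s)}\|_2\lesssim \frac{\sqrt{s}\,\lambda}{1-\mu_1(s-1)}+\frac{\|\hat{x}^{L}_{-\max(s)}\|_1}{\sqrt{s}}$. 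Substituting back into $3\lambda\|\Delta\|_1$, recalling $\lambda^2=8\sigma^2\log n$, and splitting each mixed product $\sqrt{s}\,\lambda\cdot\|\hat{x}^{L}_{-\max(s)}\|_1/\sqrt{s}$ by $2ab\le a^2+b^2$, one assembles the two $(\log n)s\sigma^2$ contributions with coefficients $\frac{9}{1-\mu_1(s-1)}$ and $\frac94$ together with the residual term $\big(2\|\hat{x}^{L}_{-\max(s)}\|_1/\sqrt{s}\big)^2$.

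The main obstacle is exactly the passage from the head estimate to a lower bound on the full $\|A\Delta\|_2^2$: the head--tail cross term $\langle A\Delta_{\max(s)},A\Delta_{-\max(s)}\rangle$ is not directly governed by the coherence level $s-1$, so it must be tamed through Lemma \ref{NonsparseCC} (applied to the $s$-sparse $\Delta_{\max(s)}$ against the non-sparse $\Delta_{-\max(s)}$) in combination with the cone constraint, and the AM--GM splitting must be arranged so that the clean factor $1-\mu_1(s-1)$ survives and the stated constants $\tfrac94$ and $\tfrac{9}{1-\mu_1(s-1)}$ emerge. Ensuring that this quadratic inequality closes while keeping the effective coherence constant intact, and tracking the numerical constants through the AM--GM steps, is the delicate part of the argument; everything else is the routine bookkeeping of the reduction above.
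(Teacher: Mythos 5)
Your setup matches the paper's: the event $\{\|A^*z\|_\infty\le\lambda/2\}$ with the stated probability, the feasibility of $\hat{x}^{L}$ for the Dantzig selector via the Lasso KKT condition, and the cone constraint for $\Delta=\hat{x}^{DS}-\hat{x}^{L}$ with $\hat{x}^{L}$ as reference signal. But your core reduction has a gap that cannot be repaired under the stated hypothesis. By polarizing, $\|Ah^{DS}\|_2^2-\|Ah^{L}\|_2^2=\langle A\Delta, A(h^{DS}+h^{L})\rangle$, you discard every term quadratic in $\Delta$, and the entire burden then falls on the error bound $\|\Delta_{\max(s)}\|_2\lesssim \sqrt{s}\,\lambda/(1-\mu_1(s-1))+\|\hat{x}^{L}_{-\max(s)}\|_1/\sqrt{s}$. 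That is a Dantzig-selector-type estimate, and, as your own last paragraph concedes, proving it requires a lower bound on $\langle A\Delta,A\Delta_{\max(s)}\rangle$ whose head--tail cross term is controlled by Lemma \ref{NonsparseCC} only at coherence level $s_1+s_2-1\ge 2s-1$: the cone constraint forces the tail $\Delta_{-\max(s)}$ to have effective sparsity at least $s$, so (exactly as in Theorem \ref{QCBP-DSgeneral} with $a=s$) the cross term is governed by $\mu_1(2s-1)$, not $\mu_1(s-1)$. No arrangement of the AM--GM splitting can change the argument of $\mu_1$. Since the theorem assumes only $\mu_1(s-1)<1$, the quantity $\mu_1(2s-1)$ is uncontrolled (it can exceed $1$), the coefficient $1-\mu_1(s-1)-C\,\mu_1(2s-1)$ in your quadratic inequality can be negative, and the inequality never closes. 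In effect your route would prove the theorem only under the stronger hypothesis $\mu_1(s-1)+\mu_1(2s-1)<1$, and even then with the wrong constant in the denominator.

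The idea you are missing is that the paper never estimates $\|\Delta\|_1$ or $\|\Delta_{\max(s)}\|_2$ in isolation, precisely so that no restricted-eigenvalue-type bound is needed. It uses the one-sided expansions $\|Ah^{L}\|_2^2=\|Ah^{DS}\|_2^2-\|A\Delta\|_2^2+2\langle A\Delta, Ah^{L}\rangle$ (and the symmetric one with the roles of the two estimators swapped), which \emph{retain} the term $-\|A\Delta\|_2^2$. Then $2\langle A\Delta,Ah^{L}\rangle\le 2\|\Delta\|_1\|A^*Ah^{L}\|_\infty\le 3\lambda\|\Delta\|_1$, and the $\ell_1$ norm is converted, using the cone constraint and Lemma \ref{CCLemma} alone (applied to the $s$-sparse head, which is the only place $\mu_1(s-1)<1$ is needed), into $\|\Delta\|_1\le \frac{2\sqrt{s}}{\sqrt{1-\mu_1(s-1)}}\|A\Delta_{\max(s)}\|_2+2\|\hat{x}^{L}_{-\max(s)}\|_1$. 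The resulting term, linear in $\|A\Delta\|_2$ (the paper silently replaces $\|A\Delta_{\max(s)}\|_2$ by $\|A\Delta\|_2$ at this point), is then absorbed into the retained $-\|A\Delta\|_2^2$ by the elementary inequality $ab-b^2\le a^2/4$, which holds no matter how large $\|A\Delta\|_2$ is; this produces the $9s\lambda^2/(1-\mu_1(s-1))$ contribution, and the final splitting $2|ab|\le a^2+b^2$ gives the $9/4$ term together with the squared tail term. Your polarization identity is exactly what destroys the $-\|A\Delta\|_2^2$ term that this absorption needs, which is why your approach gets stuck where the paper's does not.
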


\begin{proof}[Proof of Theorem \ref{Closenessofpredictionloss}]
By the proof of \cite[Theorem 3]{LL2014}, we know that $\|A^*(b-A\hat{x}^L)\|_{\infty}\leq\lambda=\eta$, i.e., $\hat{x}^{L}$ is also a feasible point of Dantzig selector model (\ref{DantzigselectorMedel}). Let $h=\hat{x}^{DS}-\hat{x}^{L}$. It follows from Lemma \ref{ConeconstriantinequalityLemma} that
\begin{align}\label{e3.1}
\|h_{-\max(s)}\|_1\leq\|h_{\max(s)}\|_1+2\|\hat{x}^{L}_{-\max(s)}\|_1.
\end{align}

First, we estimate $\|A\hat{x}^{L}-Ax\|_2^2$. We can rewrite $\|A\hat{x}^{L}-Ax\|_2^2$ as
\begin{align*}
\|A\hat{x}^{L}-Ax\|_2^2&=\|A\hat{x}^{DS}-Ax\|_2^2+\|Ah\|_2^2+2\langle Ah, A\hat{x}^{DS}-Ax\rangle\\
&=\|A\hat{x}^{DS}-Ax\|_2^2-\|Ah\|_2^2+2\langle Ah, A\hat{x}^{L}-Ax\rangle\\
&\leq\|A\hat{x}^{DS}-Ax\|_2^2-\|Ah\|_2^2+2\|h\|_1\|A^*(A\hat{x}^{L}-Ax)\|_{\infty}.
\end{align*}
By (\ref{e3.1}) and Lemma \ref{CCLemma}, we get
\begin{align}\label{e3.2}
\|h\|_1&=\|h_{\max(s)}\|_1+\|h_{-\max(s)}\|_1\leq2\|h_{\max(s)}\|_1+2\|\hat{x}^{L}_{-\max(s)}\|_1\nonumber\\
&\leq2\sqrt{s}\|h_{\max(s)}\|_2+2\|\hat{x}^{L}_{-\max(s)}\|_1\leq\frac{2\sqrt{s}}{\sqrt{1-\mu_1(s-1)}}\|Ah_{\max(s)}\|_2+2\|\hat{x}^{L}_{-\max(s)}\|_1.
\end{align}
And by \cite[Lemma 5.1]{CXZ2009}, we have
\begin{align}\label{e3.3}
\|A^*(A\hat{x}^{L}-Ax)\|_{\infty}&\leq\|A^*(A\hat{x}^{L}-b)\|_{\infty}+\|A^*(b-Ax)\|_{\infty}
\leq\lambda+\frac{\lambda}{2}=\frac{3\lambda}{2}
\end{align}
with probability at least
$$
1-\frac{1}{2\sqrt{\pi\log n}}.
$$
Combination of (\ref{e3.2}) and (\ref{e3.3}) yields
\begin{align}\label{e3.4}
\|A\hat{x}^{L}-Ax\|_2^2
&\leq\|A\hat{x}^{DS}-Ax\|_2^2-\|Ah\|_2^2+
3\lambda\bigg(\frac{2\sqrt{s}}{\sqrt{1-\mu_1(s-1)}}\|Ah_{\max(s)}\|_2+2\|\hat{x}^{L}_{-\max(s)}\|_1\bigg)\nonumber\\
&=\|A\hat{x}^{DS}-Ax\|_2^2+3\sqrt{s}\lambda\frac{2\|\hat{x}^{L}_{-\max(s)}\|_1}{\sqrt{s}}+
\bigg(\frac{6\sqrt{s}\lambda}{\sqrt{1-\mu_1(s-1)}}\|Ah\|_2-\|Ah\|_2^2\bigg)\nonumber\\
&\leq\|A\hat{x}^{DS}-Ax\|_2^2+3\sqrt{s}\lambda\frac{2\|\hat{x}^{L}_{-\max(s)}\|_1}{\sqrt{s}}+\frac{9s\lambda^2}{1-\mu_1(s-1)},
\end{align}
where the last inequality depends on the elementary inequality $ab-b^2\leq a^2/4$ for all $a, b\in\mathbb{R}$.

Next, we estimate $\|A\hat{x}^{DS}-Ax\|_2^2$. We can rewrite $\|A\hat{x}^{DS}-Ax\|_2^2$ as
\begin{align*}
\|A\hat{x}^{DS}-Ax\|_2^2&=\|A\hat{x}^{L}-Ax\|_2^2+\|Ah\|_2^2+2\langle Ah, A\hat{x}^{L}-Ax\rangle\\
&=\|A\hat{x}^{L}-Ax\|_2^2-\|Ah\|_2^2+2\langle Ah, A\hat{x}^{DS}-Ax\rangle\\
&\leq\|A\hat{x}^{L}-Ax\|_2^2-\|Ah\|_2^2+2\|h\|_1\|A^*(A\hat{x}^{DS}-Ax)\|_{\infty}.
\end{align*}
By \cite[Lemma 5.1]{CXZ2009}, we have
\begin{align}\label{e3.5}
\|A^*(A\hat{x}^{DS}-Ax)\|_{\infty}&\leq\|A^*(A\hat{x}^{DS}-b)\|_{\infty}+\|A^*(b-Ax)\|_{\infty}
\leq\eta+\frac{\lambda}{2}=\frac{3\eta}{2}
\end{align}
with probability at least
$$
1-\frac{1}{2\sqrt{\pi\log n}}.
$$
Combination of (\ref{e3.2}) and (\ref{e3.5}) yields
\begin{align}\label{e3.6}
\|A\hat{x}^{DS}-Ax\|_2^2
&\leq\|A\hat{x}^{L}-Ax\|_2^2-\|Ah\|_2^2+
3\eta\bigg(\frac{2\sqrt{s}}{\sqrt{1-\mu_1(s-1)}}\|Ah_{\max(s)}\|_2+2\|\hat{x}^{L}_{-\max(s)}\|_1\bigg)\nonumber\\
&=\|A\hat{x}^{L}-Ax\|_2^2+3\sqrt{s}\eta\frac{2\|\hat{x}^{L}_{-\max(s)}\|_1}{\sqrt{s}}+
\bigg(\frac{6\sqrt{s}\eta}{\sqrt{1-\mu_1(s-1)}}\|Ah\|_2-\|Ah\|_2^2\bigg)\nonumber\\
&\leq\|A\hat{x}^{L}-Ax\|_2^2+3\sqrt{s}\eta\frac{2\|\hat{x}^{L}_{-\max(s)}\|_1}{\sqrt{s}}+\frac{9s\eta^2}{1-\mu_1(s-1)}.
\end{align}

Finally, by observations (\ref{e3.4}) and (\ref{e3.6}), we get
\begin{align*}
\big|\|A\hat{x}^{DS}-Ax\|_2^2-\|A\hat{x}^{L}-Ax\|_2^2\big|
&\leq3\sqrt{s}\eta\frac{2\|\hat{x}^{L}_{-\max(s)}\|_1}{\sqrt{s}}+\frac{9s\eta^2}{1-\mu_1(s-1)}\\
&\leq\bigg(\frac{9}{1-\mu_1(s-1)}+\frac{9}{4}\bigg)s\eta^2+\Bigg(\frac{2\|\hat{x}^{L}_{-\max(s)}\|_1}{\sqrt{s}}\Bigg)^2,
\end{align*}
where the last inequality follows from the elementary inequality $2|ab|\leq a^2+b^2$ for all $a, b\in\mathbb{R}$.
\end{proof}

\section{Oracle Inequality \label{s4}}
\hskip\parindent

The oracle inequality approach was introduced by Donoho and Johnstone \cite{DJ1994} in the context of wavelet thresholding for signal denoising. It provides an effective tool for studying the performance of an estimation procedure by comparing it to that of an ideal estimator. This approach has been extended to study compressed sensing by Cand\`{e}s and Tao's groundbreaking work \cite{CT2007}.  In \cite{CT2007}, they developed an oracle inequality for Dantzig selector $\hat{x}^{DS}$ in the Gaussian noise setting in the framework of restricted isometry property. Later, Cand\`{e}s and Plan \cite{CP2011} extended it to matrix Lasso and matrix Dantzig selector under the condition of restricted isometry property. And almost at the same time, Cai, Wang and Xu \cite{CWX2010} extended it to Dantzig selector $\hat{x}^{DS}$ for sparse signals in the framework of mutual incoherence property. Moreover works about oracle inequality, readers can see \cite{BRT2009,C2013,CZ2014}. Motivated by \cite{CP2011} and \cite{CWX2010}, we consider oracle inequality  under the framework of cumulative coherence.

Before stating our main results,  we first give two notations. Let
$$
S_0=\{j\in[[1,n]]: |x(j)|\geq\sigma\},
$$
\begin{align}\label{e4.1}
G(\xi,x)=\sigma^2\|\xi\|_0+\|x-\xi\|_2^2.
\end{align}
and
\begin{align}\label{e4.2}
H(\xi,x)=\iota\|\xi\|_0+\|Ax-A\xi\|_2^2,
\end{align}
where $\iota=\lambda^2/8$.

\begin{theorem}\label{OracleinequalityDS}
Consider the Gaussian noise model (\ref{Gaussiannoisemodel}). Let $x$ be $s$-sparse.
 Suppose that the cumulative coherence function of the measurement matrix $A\in\mathbb{R}^{m\times n}$ satisfies
 $$
 \mu_1(s-1)+\mu_1(2s-1)<1.
 $$
 Set $\eta^*=3\sigma \sqrt{2\log n}/2$. Let $\hat{x}^{DS}$ be the minimizer of the problem
 \begin{align}\label{GaussiannoiseDSModel}
 \min_{y\in\mathbb{R}^n}~\|y\|_1 ~~\text{subject~ to}~~\|A^*(b-Ay)\|_{\infty}\leq\eta^*.
 \end{align}
 Then with probability at least
 $$
 1-\frac{1}{2\sqrt{\pi\log n}},
 $$
 $\hat{x}^{DS}$ satisfies
 \begin{align*}
\|\hat{x}^{DS}-x\|_2^2
&\leq\frac{72(5+\log n)}{\big(1-\mu_1(s-1)-\mu_1(2s-1)\big)^2}\sum_{j}\max\{\sigma^2,|x(j)|^2\}.
\end{align*}
\end{theorem}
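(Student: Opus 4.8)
The plan is to prove the bound on a single high-probability noise event and then run the cumulative-coherence machinery of Theorem~\ref{QCBP-DSgeneral} against the oracle $x_{S_0}$ rather than against $x$, so that the effective sparsity drops from $s$ to the significant dimension $|S_0|$. First I would fix the event $\mathcal A=\{\|A^*z\|_\infty\le\sigma\sqrt{2\log n}\}$, which by \cite[Lemma~5.1]{CXZ2009} (the estimate already used in the proof of Theorem~\ref{Closenessofpredictionloss}) has probability at least $1-\tfrac{1}{2\sqrt{\pi\log n}}$. Since $\eta^*=\tfrac32\sigma\sqrt{2\log n}>\sigma\sqrt{2\log n}$, on $\mathcal A$ the true signal $x$ is feasible for (\ref{GaussiannoiseDSModel}), so $\|\hat x^{DS}\|_1\le\|x\|_1$; writing $A^*A(\hat x^{DS}-x)=A^*(A\hat x^{DS}-b)+A^*z$ then gives the tube bound $\|A^*A(\hat x^{DS}-x)\|_\infty\le\eta^*+\sigma\sqrt{2\log n}=\tfrac52\sigma\sqrt{2\log n}$.

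Next I would isolate the reusable $\ell_2$ estimate. For a comparison support $\Lambda$, feasibility yields a cone inequality of the form in Lemma~\ref{ConeconstriantinequalityLemma}; substituting it, the tube bound, and Lemmas~\ref{CCLemma}, \ref{OrthogonalCC} and \ref{NonsparseCC} into the quadratic-form identity (\ref{e2.8}) reproduces the argument of Theorem~\ref{QCBP-DSgeneral} and bounds $\|\hat x^{DS}-x\|_2^2$ by $\frac{\mathrm{const}}{D^2}\big((\eta^*)^2|\Lambda|+(\text{tail of }x\text{ off }\Lambda)\big)$, where $D=1-\mu_1(s-1)-\mu_1(2s-1)$. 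Taking $\Lambda=\mathrm{supp}(x)$ (of size at most $s$) already produces the crude estimate $\lesssim\frac{(\log n)\,s\sigma^2}{D^2}$, which together with $s\sigma^2\le\sum_j\max\{\sigma^2,|x(j)|^2\}$ proves a version of the claim up to the value of the constant.

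To reach the stated dimension-adaptive bound I would, following the oracle strategy of \cite{CWX2010,CP2011}, instead take $\Lambda=S_0=\{j:|x(j)|\ge\sigma\}$, the minimizer of the ideal risk $G(\cdot,x)$ in (\ref{e4.1}), and compare $\hat x^{DS}$ with the oracle $x_{S_0}$, so that the sub-threshold block $x_{S_0^c}$ (with $\|x_{S_0^c}\|_\infty<\sigma$) enters only as the $\ell_2$ perturbation recorded in $G(x_{S_0},x)=\sigma^2|S_0|+\|x_{S_0^c}\|_2^2$. Since $\mu_1$ is nondecreasing, $1-\mu_1(|S_0|-1)-\mu_1(2|S_0|-1)\ge D$, so the denominator remains $D^2$ while the leading term becomes $(\eta^*)^2|S_0|=\tfrac92\sigma^2(\log n)|S_0|$. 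Assembling $(\eta^*)^2|S_0|+\|x_{S_0^c}\|_2^2$ and using $\sigma^2|S_0|\le\sum_{j\in S_0}|x(j)|^2$ with $|x(j)|^2\le\sigma^2$ on $S_0^c$ bounds everything by a constant times $(\log n+\mathrm{const})\sum_j\max\{\sigma^2,|x(j)|^2\}$; collecting constants and absorbing the residual additive term into the slack in $5+\log n$ yields the factor $72(5+\log n)/D^2$.

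The main obstacle is the oracle refinement. Comparing with $x_{S_0}$ in place of $x$ introduces the extra term $\|A^*Ax_{S_0^c}\|_\infty$ in the tube bound and replaces the exact-sparse cone by $\|g_{S_0^c}\|_1\le\|g_{S_0}\|_1+\|x_{S_0^c}\|_1$ for $g=\hat x^{DS}-x_{S_0}$, so one must estimate $\|A^*Ax_{S_0^c}\|_\infty$ through cumulative coherence and reconcile the $\ell_1$ tail $\|x_{S_0^c}\|_1$ produced by the recovery bound with the $\ell_2$ quantity $\|x_{S_0^c}\|_2^2$ appearing in $G$. Threading these perturbations through Lemma~\ref{NonsparseCC} without inflating the denominator, and tracking the constants tightly enough that the numerator collapses to exactly $72(5+\log n)$, is the delicate part of the argument.
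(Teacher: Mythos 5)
The part of your plan that you yourself flag as ``the delicate part'' is precisely where your route diverges from the paper's, and it is a genuine gap rather than a matter of bookkeeping. The paper never compares $\hat{x}^{DS}$ with $x_{S_0}$. It introduces the surrogate $\bar{x}=\arg\min_{\xi}H(\xi,x)$ for the penalized prediction loss $H$ of (\ref{e4.2}), and the whole point of that choice is Lemma \ref{Feasiblepoint} (a Cand\`es--Plan-type statement): the optimality of $\bar{x}$ alone yields $\|A^*(A\bar{x}-Ax)\|_{\infty}\le\lambda/2$, hence $\|A^*(b-A\bar{x})\|_{\infty}\le\lambda+\lambda/2=\eta^*$ on the noise event, so $\bar{x}$ is feasible for (\ref{GaussiannoiseDSModel}); moreover $H(\bar{x},x)\le H(x,x)$ forces $\|\bar{x}\|_0\le\|x\|_0\le s$, so $\bar{x}$ is an \emph{exactly} sparse feasible point and Remark \ref{QCBP-DS} applies to the pair $(\hat{x}^{DS},\bar{x})$ with no tail term at all. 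The residual $\|x-\bar{x}\|_2^2$ is then controlled by Lemma \ref{CCLemma}, and both pieces are absorbed into $H(\bar{x},x)\le H(x_{S_0},x)$, which is the only place where $S_0$ enters. In your route, the tube perturbation $\|A^*Ax_{S_0^c}\|_{\infty}$ is in fact controllable (since $x_{S_0^c}$ is itself $s$-sparse with $\|x_{S_0^c}\|_{\infty}<\sigma$, it is at most $\sigma(1+\mu_1(s-1))$), but the second obstacle you name is fatal: the cone inequality for $g=\hat{x}^{DS}-x_{S_0}$ carries the $\ell_1$ tail $\|x_{S_0^c}\|_1$, which the recovery machinery converts into a term of order $\|x_{S_0^c}\|_1^2/|S_0|$ in $\|g\|_2^2$. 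Taking $|S_0|=1$ and $s-1$ entries just below $\sigma$, this is of order $s^2\sigma^2$, while $(5+\log n)\,G(x_{S_0},x)$ is of order $(\log n)\,s\sigma^2$; no threading through Lemma \ref{NonsparseCC} closes that factor of $s/\log n$. The surrogate-minimizer device is the missing idea.

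Separately, you underestimate your own second paragraph: because the theorem as printed has $\max\{\sigma^2,|x(j)|^2\}$ rather than the ideal-risk terms $\min\{\sigma^2,|x(j)|^2\}$, your crude bound already implies the stated inequality, constant included. On the event $\|A^*z\|_{\infty}\le\sigma\sqrt{2\log n}$ the signal $x$ is feasible, the tube constant is $\frac{5}{2}\sigma\sqrt{2\log n}$, and re-running Theorem \ref{QCBP-DSgeneral} at sparsity $t=\|x\|_0\le s$ (the hypothesis transfers since $\mu_1$ is nondecreasing) gives $\|\hat{x}^{DS}-x\|_2^2\le 25\,t\sigma^2\log n/D^2$ with $D=1-\mu_1(s-1)-\mu_1(2s-1)$, whereas the right-hand side of the theorem is at least $72(5+\log n)\,t\sigma^2/D^2$; since $25\log n\le 72(5+\log n)$, you are done with room to spare. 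The refinement you were attempting is needed only for the genuine oracle inequality, with $\min$ in place of $\max$ --- and that stronger form is exactly what the paper's argument establishes before its final line, via $G(x_{S_0},x)=\sum_j\min\{\sigma^2,|x(j)|^2\}$ in (\ref{e4.3}) and (\ref{e4.9}). For that version, your proposal as written does not go through, for the reason given above.
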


In order to prove Theorem \ref{OracleinequalityDS}, we give one useful lemma. Its proof follows the same lines of the proof of \cite[Lemma 3.5]{CP2011} and we omit it.
\begin{lemma}\label{Feasiblepoint}
Let $\bar{x}=\arg\min_{\xi\in\mathbb{R}^n}H(\xi,x)$, then $\|A^*(A\bar{x}-Ax)\|_{\infty}\leq\lambda/2$.
\end{lemma}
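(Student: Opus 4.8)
The plan is to exploit the optimality of $\bar{x}$ through single-coordinate perturbations, turning the $\ell_\infty$ bound on $A^*(A\bar{x}-Ax)$ into $n$ scalar inequalities, one per column. Writing $r=A\bar{x}-Ax$, the quantity to control is $\|A^*r\|_\infty=\max_{1\le i\le n}|\langle A_i,r\rangle|$, so it suffices to bound each $|\langle A_i,r\rangle|$ separately. For a fixed index $i$ I would compare $H(\bar{x},x)$ with $H(\bar{x}+te_i,x)$, where $e_i$ is the $i$-th coordinate vector and $t\in\mathbb{R}$; since $\bar{x}$ minimizes $H(\cdot,x)$, this difference must be nonnegative for every $t$.

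The key computation uses the normalization $\|A_i\|_2=1$: since $A(\bar{x}+te_i)-Ax=r+tA_i$, one has $\|r+tA_i\|_2^2=\|r\|_2^2+2t\langle A_i,r\rangle+t^2$. I would then split into two cases according to whether coordinate $i$ lies in the support of $\bar{x}$. If $\bar{x}_i\neq 0$, then for all sufficiently small $t$ the support, and hence the term $\iota\|\bar{x}+te_i\|_0$, is unchanged, so the $\ell_0$ penalty drops out and optimality reduces to first-order stationarity of the smooth part; minimizing $2t\langle A_i,r\rangle+t^2$ forces $\langle A_i,r\rangle=0$, which trivially obeys the claimed bound.

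The substantive case is $\bar{x}_i=0$. Here any $t\neq 0$ enlarges the support by exactly one element, so $\iota\|\bar{x}+te_i\|_0=\iota\|\bar{x}\|_0+\iota$, and the optimality inequality $H(\bar{x}+te_i,x)-H(\bar{x},x)\ge 0$ becomes $\iota+2t\langle A_i,r\rangle+t^2\ge 0$ for all $t$. Minimizing the right-hand side over $t$ (at $t=-\langle A_i,r\rangle$) yields $\iota-\langle A_i,r\rangle^2\ge 0$, i.e. $|\langle A_i,r\rangle|\le\sqrt{\iota}=\lambda/(2\sqrt{2})\le\lambda/2$ because $\iota=\lambda^2/8$. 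Combining the two cases gives $|\langle A_i,r\rangle|\le\lambda/2$ for every $i$, hence $\|A^*(A\bar{x}-Ax)\|_\infty\le\lambda/2$.

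I expect the only real subtlety to be the non-smoothness of $\|\cdot\|_0$: a naive gradient argument is unavailable, and the two regimes (varying inside the support versus activating a new coordinate) must be treated separately, with the discrete jump of size $\iota$ in the penalty providing exactly the slack that bounds $\langle A_i,r\rangle$. The normalization $\|A_i\|_2=1$ is likewise essential, since it fixes the coefficient of $t^2$ and thereby the constant $\sqrt{\iota}$; this argument is the coordinate-wise analogue of the one used for \cite[Lemma 3.5]{CP2011}.
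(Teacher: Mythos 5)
Your proof is correct and is essentially the approach the paper intends: the paper omits the argument, noting it follows the same lines as \cite[Lemma 3.5]{CP2011}, and your single-coordinate perturbation $\bar{x}+te_i$ with the unit jump $\iota$ in the $\ell_0$ penalty is precisely the vector analogue of the rank-one perturbation used there. Indeed, since $\|A_i\|_2=1$ exactly, your optimization over $t$ even yields the slightly stronger bound $\|A^*(A\bar{x}-Ax)\|_\infty\le\sqrt{\iota}=\lambda/(2\sqrt{2})$.
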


Now we begin to prove Theorem \ref{OracleinequalityDS}.

\begin{proof}[Proof of Theorem \ref{OracleinequalityDS}]
Set $\lambda=\sigma\sqrt{2\log n}$. By \cite[Lemma 5.1]{CXZ2009}, event $E=\{z\in\mathbb{R}^m: \|A^*z\|_{\infty}\leq\lambda\}$ occurs with probability at least
$$
1-\frac{1}{2\sqrt{\pi\log n}}.
$$
In the following, we shall assume that event $E$ occurs.

We can rewrite $\|\hat{x}^{DS}-x\|_2^2$ as
\begin{align}\label{e4.5}
\|\hat{x}^{DS}-x\|_2^2\leq2\|\hat{x}^{DS}-\bar{x}\|_2^2+2\|x-\bar{x}\|_2^2.
\end{align}
Next we estimate $\|\hat{x}^{DS}-\bar{x}\|_2^2$ and $\|x-\bar{x}\|_2^2$, respectively.

First, we estimate $\|\hat{x}^{DS}-\bar{x}\|_2^2$. It follows from \cite[Lemma 5.1]{CXZ2009} and Lemma \ref{Feasiblepoint} that
\begin{align}\label{e4.6}
\|A^*(b-A\bar{x})\|_{\infty}\leq\|A^*(b-Ax)\|_{\infty}+\|A^*(Ax-A\bar{x})\|_{\infty}\leq\lambda+\frac{\lambda}{2}=\frac{3\lambda}{2}=\eta^*
\end{align}
with probability at least
$$
1-\frac{1}{2\sqrt{\pi\log n}}.
$$
Therefore, $\bar{x}$ is a feasible point of Dantzig selector model (\ref{GaussiannoiseDSModel}). And by the definition of $H(\xi,x)$, we have $H(\bar{x},x)\leq H(x,x)$, which implies that $\|\bar{x}\|_0\leq\|x\|_0\leq s$. Thus plugging $\bar{x}$ into Theorem \ref{QCBP-DS} gives
\begin{align}\label{e4.7}
\|\hat{x}^{DS}-\bar{x}\|_2^2&\leq\bigg(\frac{2\sqrt{2}}{1-\mu_1(\|\bar{x}\|_0-1)-\mu_1(2\|\bar{x}\|_0-1)}\sqrt{\|\bar{x}\|_0}\eta^*\bigg)^2\nonumber\\
&\leq\frac{8(\eta^*)^2}{\big(1-\mu_1(s-1)-\mu_1(2s-1)\big)^2}\|\bar{x}\|_0.
\end{align}

Now, we turn our attention to estimate $\|x-\bar{x}\|_2^2$. Plugging $\bar{x}-x$ into Lemma \ref{CCLemma} yields
\begin{align}\label{e4.8}
\|\bar{x}-x\|_2^2\leq\frac{1}{1-\mu_1(2s-1)}\|A\bar{x}-Ax\|_2^2.
\end{align}

Then substituting (\ref{e4.7}) and (\ref{e4.8}) into (\ref{e4.5}), we get
\begin{align*}
\|\hat{x}^{DS}-x\|_2^2&\leq\frac{16(\eta^*)^2}{\big(1-\mu_1(s-1)-\mu_1(2s-1)\big)^2}\|\bar{x}\|_0+\frac{2}{1-\mu_1(2s-1)}\|A\bar{x}-Ax\|_2^2\\
&=\frac{16(\eta^*)^2}{\big(1-\mu_1(s-1)-\mu_1(2s-1)\big)^2\iota}\iota\|\bar{x}\|_0+\frac{2}{1-\mu_1(2s-1)}\|A\bar{x}-Ax\|_2^2\\
&\leq\frac{16(\eta^*)^2}{\big(1-\mu_1(s-1)-\mu_1(2s-1)\big)^2\iota}H(\bar{x},x)\\
&=\frac{144\big(1+\mu_1(1)\big)}{\big(1-\mu_1(s-1)-\mu_1(2s-1)\big)^2}H(\bar{x},x).
\end{align*}

Note that $H(\bar{x},x)\leq H(x_{S_0},x)$. It suffices to deal with $H(x_{S_0},x)$.
Notice that
\begin{align}\label{e4.3}
G(x_{S_0},x)&=\sigma^2\|x_{S_0}\|_0+\|x_{S_0^c}\|_2^2=\sigma^2\sum_{j\in S_0}\chi_{S_0}(j)+\sum_{j\in S_0^c}|x(j)|^2\nonumber\\
&=\sum_{j}\min\{\sigma^2,|x(j)|^2\}.
\end{align}
Therefore, an application of (\ref{e4.3}) yields
\begin{align}\label{e4.9}
H(x_{S_0},x)&=\frac{\lambda^2}{4\big(1+\mu_1(1)\big)}\|x_{S_0}\|_0+\|Ax_{S_0^c}\|_2^2\nonumber\\
&\leq\frac{\log n}{2\big(1+\mu_1(1)\big)}\sigma^2\|x_{S_0}\|_0+\big(1+\mu_1(s-1)\big)\|x_{S_0^c}\|_2^2\nonumber\\
&\leq\bigg(\frac{\log n}{2\big(1+\mu_1(1)\big)}+\big(1+\mu_1(s-1)\big)\bigg)G(x_{S_0},x)\nonumber\\
&\leq\frac{5+\log n}{2\big(1+\mu_1(1)\big)}\sum_{j}\max\{\sigma^2,|x(j)|^2\},
\end{align}
where the last inequality follows from $\mu_1(1)\leq\mu_1(2s-1)$ and $\mu_1(s-1)+\mu_1(2s-1)<1$, and the inequality $ab\leq(a+b)^2/4$.

Therefore, by $H(\bar{x},x)\leq H(x_{S_0},x)$ and (\ref{e4.9}), we have
\begin{align*}
\|\hat{x}&^{DS}-x\|_2^2\\
&\leq\frac{144\big(1+\mu_1(1)\big)}{\big(1-\mu_1(s-1)-\mu_1(2s-1)\big)^2}\frac{5+\log n}{2\big(1+\mu_1(1)\big)}\sum_{j}\max\{\sigma^2,|x(j)|^2\}\\
&=\frac{72(5+\log n)}{\big(1-\mu_1(s-1)-\mu_1(2s-1)\big)^2}\sum_{j}\max\{\sigma^2,|x(j)|^2\},
\end{align*}
which finishes the proof.
\end{proof}

\begin{remark}
It is a pity that our method of proving the oracle inequality for $\hat{x}^{DS}$ may not hold for $\hat{x}^{L}$.
\end{remark}

%%%%%%%%%%%%%%%%%%%%%%%%%%%%%%%%%%%%%%%%%%%%%%%%%%%%%%%%%%%%%%%%
%%%%%%%%%%%%%%%%%%%%%%%  Section 5 %%%%%%%%%%%%%%%%%%%%%%%%%%%%%%
%%%%%%%%%%%%%%%%%%%%%%%%%%%%%%%%%%%%%%%%%%%%%%%%%%%%%%%%%%%%%%%%
\section{Relationship to Restricted Eigenvalue Condition \label{s5}}
\hskip\parindent

In \cite{BRT2009}, Bickel, Ritov and Tsybakov introduced a key assumption-restricted eigenvalue condition (RE-condition), which is needed to guarantee nice statistical properties of the Lasso and Dantzig selectors. Many researchers have investigated the RE-condition, especial the relationship between some other recovery condition and the RE-condition. For example, \cite{C2013} showed that RE-condition implies the robust null space property.  Xia and Li \cite{XL2016} illustrated the relationship between the frame restricted isometry property (RIP) and the frame RE-condition, and showed that the frame RE-condition is a relaxation of the frame RIP.

In this section, we will investigate the relationship between cumulative coherence and the RE-condition.
First, we recall the restricted eigenvalue condition.

\begin{definition}\label{REcondition}
Let $1\leq s\leq n$ and $\tau>0$. A measurement matrix $A\in\mathbb{R}^{m\times n}$ satisfies the RE-condition of order $s$ and $\tau$ with constant $K(s,\tau, A)=K(s,\tau)$, if for all $0\neq x\in\mathbb{R}^n$,
\begin{align*}
K(x,\tau):=\min_{S\subset [n]:|S|\leq s}\min_{\|x_{S^c}\|_1\leq\tau\|x_{S}\|_1}\frac{\|Ax\|_2}{\|x_{S}\|_2}>0.
\end{align*}
\end{definition}

\begin{theorem}\label{CC-REcondition}
Suppose that the cumulative coherence function of the measurement matrix $A\in\mathbb{R}^{m\times n}$ satisfies
$$
\mu_1(s-1)+\tau\sqrt{s}\mu_1(s)<1,
$$
then matrix $A$ satisfies the RE-condition with
$$
K(s,\tau)\geq\frac{\bigg(1-\mu_1(s-1)-\tau\sqrt{s}\mu_1(s)\bigg)}{\sqrt{(1+\mu_1(s-1)}}.
$$
And if the cumulative coherence function of the measurement matrix $A\in\mathbb{R}^{m\times n}$ satisfies
$$
\mu_1(s+a-1)+\tau\sqrt{\frac{s}{b}}\mu_1(s+a+b-1)<1,
$$
where $1\leq b\leq 4a$, then matrix $A$ satisfies the RE-condition with
$$
K(s,\tau)\geq\frac{\bigg(1-\mu_1(s+a-1)-\tau\sqrt{\frac{s}{b}}\mu_1(s+a+b-1)\bigg)}{\sqrt{(1+\mu_1(s+a-1)}}.
$$
\end{theorem}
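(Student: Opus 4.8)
The plan is to verify Definition~\ref{REcondition} directly: fix any $0\neq x\in\mathbb{R}^n$ and any $S\subset[n]$ with $|S|\leq s$ and $\|x_{S^c}\|_1\leq\tau\|x_S\|_1$, and produce a lower bound for $\|Ax\|_2/\|x_S\|_2$ equal to the asserted value of $K(s,\tau)$. The engine in both parts is the elementary inequality $\|Ax\|_2\,\|Aw\|_2\geq\langle Ax,Aw\rangle$ applied to a well-chosen sparse ``anchor'' $w$ supported on a superset of $S$; expanding $\langle Ax,Aw\rangle=\|Aw\|_2^2+\langle A(x-w),Aw\rangle$, I bound the diagonal term from below by Lemma~\ref{CCLemma} and the off-support (cross) term from above by Lemma~\ref{NonsparseCC}. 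Dividing through by $\|Aw\|_2\leq\sqrt{1+\mu_1(\cdot)}\,\|w\|_2$ then yields the stated quotient, and the coherence hypothesis is exactly what makes the resulting numerator positive, so that $K(s,\tau)>0$ as required.

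For the first bound I take $w=x_S$, which is $s$-sparse, so that $\langle Ax,Ax_S\rangle=\|Ax_S\|_2^2+\langle Ax_{S^c},Ax_S\rangle$. Lemma~\ref{CCLemma} gives $\|Ax_S\|_2^2\geq(1-\mu_1(s-1))\|x_S\|_2^2$ together with $\|Ax_S\|_2\leq\sqrt{1+\mu_1(s-1)}\,\|x_S\|_2$. For the cross term I apply Lemma~\ref{NonsparseCC} with $s_1=s$, $s_2=1$ and $\alpha=\|x_{S^c}\|_1$ (admissible since $\|x_{S^c}\|_\infty\leq\|x_{S^c}\|_1=\alpha\cdot s_2$ and the supports are disjoint), obtaining $|\langle Ax_{S^c},Ax_S\rangle|\leq\|x_{S^c}\|_1\,\mu_1(s)\|x_S\|_2$. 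The cone hypothesis and Cauchy--Schwarz ($\|x_{S^c}\|_1\leq\tau\|x_S\|_1\leq\tau\sqrt{s}\,\|x_S\|_2$) upgrade this to $\tau\sqrt{s}\,\mu_1(s)\|x_S\|_2^2$. Combining, $\|Ax\|_2\sqrt{1+\mu_1(s-1)}\,\|x_S\|_2\geq\big(1-\mu_1(s-1)-\tau\sqrt{s}\,\mu_1(s)\big)\|x_S\|_2^2$, and dividing reproduces the first claimed value of $K(s,\tau)$.

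For the second bound I keep the same skeleton but enlarge the anchor. Order the moduli of the entries of $x_{S^c}$ as $c_1\geq c_2\geq\cdots\geq 0$, let $T$ index the $a$ largest, and set $w=x_{\tilde S}$ with $\tilde S=S\cup T$, an $(s+a)$-sparse vector. Now Lemma~\ref{CCLemma} contributes $\mu_1(s+a-1)$, and Lemma~\ref{NonsparseCC} is applied to $y=x_{\tilde S^c}$ with $s_1=s+a$ and $s_2=b$, producing the coherence $\mu_1(s+a+b-1)$ and the factor $\sqrt{b}$. Since $\|x_{\tilde S}\|_2\geq\|x_S\|_2$, a lower bound of the form $\|Ax\|_2\geq K\,\|x_{\tilde S}\|_2$ already suffices to bound $\|Ax\|_2/\|x_S\|_2$.

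The main obstacle is the tail estimate. To invoke Lemma~\ref{NonsparseCC} I must choose $\alpha\geq\max\{\|x_{\tilde S^c}\|_\infty,\ \|x_{\tilde S^c}\|_1/b\}$ and then control $\alpha\sqrt{b}\,\mu_1(s+a+b-1)$ against the target $\tau\sqrt{s/b}\,\mu_1(s+a+b-1)\|x_{\tilde S}\|_2$, i.e.\ I must show $\alpha\,b\leq\tau\sqrt{s}\,\|x_{\tilde S}\|_2$. The available inputs are $\|x_{\tilde S^c}\|_\infty=c_{a+1}\leq\frac{1}{a}\sum_{j\leq a}c_j$, the budget $\|x_{\tilde S^c}\|_1\leq\|x_{S^c}\|_1\leq\tau\sqrt{s}\,\|x_S\|_2$, the head bound $\|x_T\|_2^2=\sum_{j\leq a}c_j^2\leq\|x_{\tilde S}\|_2^2$, and the cone constraint. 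When $\alpha$ is governed by the $\ell_1$ budget the bound $\alpha b\leq\|x_{\tilde S^c}\|_1\leq\tau\sqrt{s}\,\|x_S\|_2$ is immediate; the delicate regime is when $\alpha=c_{a+1}$, where one must bound $b\,c_{a+1}$ by retaining (rather than discarding) the gain of $\|x_{\tilde S}\|_2$ over $\|x_S\|_2$ that is furnished by $\|x_T\|_2\geq\sqrt{a}\,c_{a+1}$. Balancing this $\ell_\infty$ contribution against the $\ell_1$ budget is precisely where the admissible range $1\leq b\leq 4a$ enters, and carrying out this balancing sharply (a crude estimate only yields $b\leq a$) is the technical heart of the argument. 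Once $\alpha\,b\leq\tau\sqrt{s}\,\|x_{\tilde S}\|_2$ is secured, the same division as in the first part delivers the second value of $K(s,\tau)$.
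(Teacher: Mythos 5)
Your first bound is correct, and it is essentially the paper's own argument for that case (the paper treats it as the instance $a=0$, $b=1$ of its block decomposition, summing Lemma \ref{OrthogonalCC} over singleton blocks; your single invocation of Lemma \ref{NonsparseCC} with $s_1=s$, $s_2=1$ is an equivalent packaging and yields the identical constant).

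The second part, however, has a genuine gap, and it sits exactly at the step you defer as ``the technical heart'': the inequality $\alpha b\leq\tau\sqrt{s}\,\|x_{\tilde S}\|_2$ with $\alpha=\max\{\|x_{\tilde S^c}\|_\infty,\ \|x_{\tilde S^c}\|_1/b\}$ is not merely delicate --- it is false under the hypotheses, so no balancing can secure it. Concretely, take $s=4$, $\tau=1/10$, $a=25$, $b=4a=100$, let $x_S$ have four entries equal to $1$, and let $x_{S^c}$ consist of $a+1=26$ equal entries of height $c=\tau\|x_S\|_1/26=0.4/26$, so the cone constraint holds with equality. Then $x_{\tilde S^c}$ is a single spike of height $c$, forcing $\alpha\geq c$, hence $\alpha b\geq 100c\approx 1.54$, while $\tau\sqrt{s}\,\|x_{\tilde S}\|_2=0.2\sqrt{4+25c^2}\approx 0.40$. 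The failure is structural: Lemma \ref{NonsparseCC} charges the price $\alpha\sqrt{s_2}$, which is the correct order only when the tail's $\ell_1$ mass is spread over roughly $s_2$ entries of height about $\alpha$; when the tail is concentrated on few entries, a one-shot application with $s_2=b$ overestimates the true size of the tail by a factor of order $\sqrt{b}$, and $b\leq 4a$ gives no mechanism to recover that factor.

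The paper's proof avoids this by not using Lemma \ref{NonsparseCC} at all in this part. It partitions the tail $x_{S_*^c}$ into consecutive blocks $S_1,S_2,\dots$ of size $b$ ordered by decreasing magnitude, applies the sparse--sparse bound of Lemma \ref{OrthogonalCC} to each block separately, and controls the resulting sum $\sum_{j\geq1}\|x_{S_j}\|_2$ by Lemma \ref{2normestimate}: summing blockwise and telescoping the $(\max-\min)$ terms gives
\begin{equation*}
\sum_{j\geq1}\|x_{S_j}\|_2\;\leq\;\frac{1}{\sqrt{b}}\sum_{j\geq1}\|x_{S_j}\|_1+\frac{\sqrt{b}}{4}\,\|x_{S_1}\|_\infty,
\end{equation*}
and the $\ell_\infty$ remainder is absorbed via $\frac{\sqrt{b}}{4}\|x_{S_1}\|_\infty\leq\frac{b}{4a}\cdot\frac{1}{\sqrt{b}}\|x_{S_0}\|_1\leq\frac{1}{\sqrt{b}}\|x_{S_0}\|_1$; this absorption is where the hypothesis $b\leq 4a$ actually enters. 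The blockwise quantity $\sum_j\|x_{S_j}\|_2$ adapts to concentrated tails (in the example above it equals $c$, not $c\sqrt{b}$), which is exactly the adaptivity your single polytope-lemma application lacks. To repair your argument you would need to replace the one-shot use of Lemma \ref{NonsparseCC} by this block decomposition together with Lemma \ref{2normestimate}, at which point you have reproduced the paper's proof.
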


The following lemma comes from \cite{CWX2010-1}.

\begin{lemma}\label{2normestimate}
For any $x\in\mathbb{R}^n$
$$
\|x\|_2-\frac{\|x\|_1}{\sqrt{n}}\leq\frac{\sqrt{n}}{4}\big(\max_{1\leq j\leq n}|x_j|-\min_{1\leq j\leq n}|x_j|\big).
$$
\end{lemma}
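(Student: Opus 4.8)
The plan is to verify Definition \ref{REcondition} directly. Fix $x\neq 0$ and a set $S$ with $|S|\le s$ and $\|x_{S^c}\|_1\le\tau\|x_S\|_1$; it suffices to bound $\|Ax\|_2/\|x_S\|_2$ from below by the claimed constant (monotonicity of $\mu_1$ lets me freely replace $|S|$ by $s$ in the final estimates, so I may think of $|S|=s$). For any index set $T\supseteq S$, the Cauchy--Schwarz step combined with the splitting $x=x_T+x_{T^c}$ gives
\begin{align*}
\|Ax\|_2\,\|Ax_T\|_2\ge\langle Ax,Ax_T\rangle\ge\|Ax_T\|_2^2-\big|\langle Ax_{T^c},Ax_T\rangle\big|,
\end{align*}
and Lemma \ref{CCLemma} controls $\|Ax_T\|_2^2\ge(1-\mu_1(|T|-1))\|x_T\|_2^2$ from below and $\|Ax_T\|_2\le\sqrt{1+\mu_1(|T|-1)}\,\|x_T\|_2$ from above. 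Dividing by $\|Ax_T\|_2$ and using $\|x_T\|_2\ge\|x_S\|_2$, the whole theorem reduces to estimating the single cross term $\big|\langle Ax_{T^c},Ax_T\rangle\big|$; the hypothesis then guarantees the relevant numerator is positive, which legitimizes the division.

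For the first bound I take $T=S$, so $x_{T^c}=x_{S^c}$. Writing $x_{S^c}$ as the sum of its coordinates (each $1$-sparse and disjoint from $S$) and applying Lemma \ref{OrthogonalCC} coordinatewise (equivalently Lemma \ref{NonsparseCC} with $s_1=s$, $s_2=1$) yields $\big|\langle Ax_{S^c},Ax_S\rangle\big|\le\mu_1(s)\|x_S\|_2\|x_{S^c}\|_1$. Since $\|x_{S^c}\|_1\le\tau\|x_S\|_1\le\tau\sqrt{s}\,\|x_S\|_2$, the cross term is at most $\tau\sqrt{s}\,\mu_1(s)\|x_S\|_2^2$, and the template above with $|T|=s$ produces $K(s,\tau)\ge\big(1-\mu_1(s-1)-\tau\sqrt{s}\mu_1(s)\big)/\sqrt{1+\mu_1(s-1)}$.

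The second bound is the substantive case. Write $d_1\ge d_2\ge\cdots$ for the magnitudes of the entries of $x_{S^c}$ in decreasing order, and let $T=S\cup S_1$, where $S_1$ indexes the $a$ largest entries of $x_{S^c}$, so $|T|=s+a$ and $x_{T^c}$ is the tail below the top $a$ entries. I partition this tail, in order of decreasing magnitude, into consecutive blocks $g^1,g^2,\dots$ of size $b$ (padding the last with zeros). Each $g^k$ is $b$-sparse with support disjoint from $T$, so Lemma \ref{OrthogonalCC} gives $\big|\langle Ag^k,Ax_T\rangle\big|\le\mu_1(s+a+b-1)\|g^k\|_2\|x_T\|_2$, whence $\big|\langle Ax_{T^c},Ax_T\rangle\big|\le\mu_1(s+a+b-1)\|x_T\|_2\sum_k\|g^k\|_2$. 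The argument now hinges on $\sum_k\|g^k\|_2\le\|x_{S^c}\|_1/\sqrt{b}$, and this is where Lemma \ref{2normestimate} enters: applied to each block of length $b$ it gives $\|g^k\|_2\le\|g^k\|_1/\sqrt{b}+(\sqrt{b}/4)(\max_k-\min_k)$, with $\max_k,\min_k$ the largest and smallest magnitudes in block $k$. Summing, the first terms add to $\|x_{T^c}\|_1/\sqrt{b}$, while the differences $\max_k-\min_k$ telescope (each block's smallest magnitude dominates the next block's largest, and the padded final minimum is $0$) to at most the single value $d_{a+1}$; bounding $d_{a+1}\le\frac1a\sum_{j\le a}d_j$ turns the second contribution into $(\sqrt{b}/(4a))\sum_{j\le a}d_j$.

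The key step, and the main obstacle, is to see that the constraint $b\le 4a$ is exactly what collapses these two pieces. Under $b\le 4a$ one has $\sqrt{b}/(4a)\le 1/\sqrt{b}$, so the top-$a$ contribution is bounded by $\sum_{j\le a}d_j/\sqrt{b}$, and adding the tail piece $\|x_{T^c}\|_1/\sqrt{b}$ recombines into $\|x_{S^c}\|_1/\sqrt{b}$ precisely. Feeding this into the block estimate and using $\|x_{S^c}\|_1\le\tau\sqrt{s}\|x_S\|_2$ gives $\big|\langle Ax_{T^c},Ax_T\rangle\big|\le\tau\sqrt{s/b}\,\mu_1(s+a+b-1)\|x_S\|_2\|x_T\|_2$, and the template with $|T|=s+a$ then yields the stated constant. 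I would take care over two minor points: the incomplete final block is handled by zero-padding (which only helps the telescoping), and the division by $\|Ax_T\|_2$ is valid because the hypothesis $\mu_1(s+a-1)+\tau\sqrt{s/b}\mu_1(s+a+b-1)<1$ forces the numerator to be nonnegative after the bound $\|x_T\|_2\ge\|x_S\|_2$.
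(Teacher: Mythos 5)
Your proposal does not prove the statement it was asked to prove. The target is Lemma \ref{2normestimate}, a free-standing elementary inequality valid for every $x\in\mathbb{R}^n$; what you have written is a proof of Theorem \ref{CC-REcondition}, and at the decisive block-summation step you invoke Lemma \ref{2normestimate} itself as a known tool (``this is where Lemma \ref{2normestimate} enters''). Relative to the target statement the argument is therefore circular: the lemma is assumed, never derived, and nothing in your text engages with the quantity $\|x\|_2-\|x\|_1/\sqrt{n}$ or with the constant $\sqrt{n}/4$. (For the record, the paper does not reprove the lemma either --- it imports it from \cite{CWX2010-1} --- and your reconstruction of the restricted-eigenvalue argument, with $T=S\cup S_1$, size-$b$ blocks, the telescoping of block maxima, and the role of $b\le 4a$, does track the paper's Section \ref{s5} proof quite faithfully; but that is a different result, one which consumes the lemma rather than establishes it.)

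What a correct proof needs is short and entirely different in character. Every quantity in the inequality depends only on $|x_j|$, so you may assume $x_j\ge0$; put $M=\max_{1\leq j\leq n} x_j$ and $m=\min_{1\leq j\leq n} x_j$. From $(x_j-m)(M-x_j)\ge0$ for each $j$, summing over $j$ gives
\begin{align*}
\|x\|_2^2\le (M+m)\|x\|_1-nmM.
\end{align*}
Since the right-hand side of the lemma's inequality is nonnegative, it suffices to show that $(M+m)\|x\|_1-nmM\le\big(\|x\|_1/\sqrt{n}+\tfrac{\sqrt{n}}{4}(M-m)\big)^2$. Writing $u=\|x\|_1/n$ and dividing by $n$, this becomes the quadratic inequality
\begin{align*}
u^2-\frac{M+3m}{2}\,u+mM+\frac{(M-m)^2}{16}\ge 0,
\end{align*}
whose minimum over $u\in\mathbb{R}$, attained at $u=(M+3m)/4$, equals $m(M-m)/2\ge0$. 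This completes the proof of the lemma; none of the machinery in your write-up (cumulative coherence, Lemmas \ref{CCLemma}, \ref{OrthogonalCC}, \ref{NonsparseCC}, or the cone constraint) is relevant to it.
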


\begin{proof}[Proof of Theorem \ref{CC-REcondition}]
Let $M(s,\tau):=\{x\in\mathbb{R}^n: \exists T\subset [n]~\text{and}~|T|\leq s~\text{s.t.}~\|x_{T^c}\|_1\leq\tau\|x_{T}\|_1\}$. Let $x\in M(s,\tau)$ and $\|x\|_2=1$. Then there exists $T\subset[n]$ with $|T|\leq s$ such that $\|x_{T^c}\|_1\leq\tau\|x_{T}\|_1$. We take $S$ as the locations of the $|T|$ largest coefficients of $x$ in magnitude. And denote $S_0$ as the locations of the $a$ largest coefficients of $x_{S^c}$ in magnitude. Denote $S_{*}=S\cup S_0$. We decompose $S_{*}^c$ as
$$
S_{*}^c=\cup_{j\geq} S_j,
$$
where $S_1$ is the index set of the $b$ largest entries of $x_{S_{*}^c}$, $S_2$ is the index set of the $b$ largest entries of $x_{(S_*\cup S_1)^c}$, and so on.

We consider the following identity
\begin{align}\label{e5.1}
|\langle Ax, Ax_{S_*}\rangle|=|\langle Ax_{S_*}, Ax_{S_*}\rangle-\langle Ax_{S_*^c}, Ax_{S_*}\rangle|.
\end{align}
First, we give out a lower bound for (\ref{e5.1}).  It follows from Lemma \ref{CCLemma} and Lemma \ref{OrthogonalCC} that
\begin{align}\label{e5.2}
|\langle Ax, Ax_{S_*}\rangle|&\geq\|Ax_{S_*}\|_2^2-\sum_{j\geq 1}|\langle Ax_{S_j}, Ax_{S_*}\rangle|\nonumber\\
&\geq\big(1-\mu_1(s+a-1)\big)\|x_{S_*}\|_2^2-\mu_1(s+a+b-1)\sum_{j\geq 1}\|x_{S_j}\|_2\|x_{S_*}\|_2.
\end{align}
For the case $1\leq b\leq 4a$, Lemma \ref{2normestimate}  and $\|x_{T^c}\|_1\leq\tau\|x_{T}\|_1$ yield
\begin{align}\label{e5.3}
\sum_{j\geq 1}\|x_{S_j}\|_2&\leq\frac{1}{\sqrt{b}}\sum_{j\geq 1}\|x_{S_j}\|_1+\frac{\sqrt{b}}{4}\|x_{S_1}\|_\infty
\leq\frac{1}{\sqrt{b}}\bigg(\sum_{j\geq1}\|x_{S_j}\|_1+\frac{b}{4a}\|x_{S_0}\|_1\bigg)\nonumber\\
&\leq\frac{1}{\sqrt{b}}\|x_{S^c}\|_1\leq\frac{1}{\sqrt{b}}\|x_{T^c}\|_1\leq\frac{\tau}{\sqrt{b}}\|x_{T}\|_1
\leq\frac{\tau}{\sqrt{b}}\|x_{S}\|_1\leq\tau\sqrt{\frac{s}{b}}\|x_{S}\|_2.
\end{align}
For the case $a=0$ and $b=1$, we have a more simpler estimate
\begin{align}\label{e5.4}
\sum_{j\geq 1}\|x_{S_j}\|_2=\sum_{j\geq 1}\|x_{S_j}\|_1=\|x_{S^c}\|_{1}\leq\tau\|x_{S}\|_1\leq\tau\sqrt{s}\|x_{S}\|_2.
\end{align}
Then substituting (\ref{e5.3}) or (\ref{e5.4}) into (\ref{e5.2}), we get
\begin{align}\label{e5.5}
|\langle Ax, Ax_{S_*}\rangle|
&\geq\big(1-\mu_1(s+a-1)\big)\|x_{S_*}\|_2^2-\tau\sqrt{\frac{s}{b}}\mu_1(s+a+b-1)\|x_{S}\|_2\|x_{S_*}\|_2\nonumber\\
&\geq\bigg(1-\mu_1(s+a-1)-\tau\sqrt{\frac{s}{b}}\mu_1(s+a+b-1)\bigg)\|x_{S_*}\|_2^2
\end{align}
or
\begin{align}\label{e5.6}
|\langle Ax, Ax_{S_*}\rangle|
&\geq\big(1-\mu_1(s-1)\big)\|x_{S_*}\|_2^2-\tau\sqrt{s}\mu_1(s)\|x_{S}\|_2\|x_{S_*}\|_2\nonumber\\
&\geq\bigg(1-\mu_1(s-1)-\tau\sqrt{s}\mu_1(s)\bigg)\|x_{S_*}\|_2^2.
\end{align}

Next, we give an upper bound for (\ref{e5.1}).  It follows from Lemma \ref{CCLemma} that
\begin{align}\label{e5.7}
|\langle Ax, Ax_{S_*}\rangle|\leq\|Ax_{S_*}\|_2\|Ax\|_2\leq\sqrt{(1+\mu_1(s+a-1)}\|x_{S_*}\|_2\|Ax\|_2.
\end{align}

For the case $a\geq 1$ and $b\leq 4a$, we combine (\ref{e5.5}) with (\ref{e5.7}) to get
\begin{align*}
\sqrt{(1+\mu_1(s+a-1)}\|Ax\|_2&\geq\bigg(1-\mu_1(s+a-1)-\tau\sqrt{\frac{s}{b}}\mu_1(s+a+b-1)\bigg)\|x_{S_*}\|_2\\
&\geq\bigg(1-\mu_1(s+a-1)-\tau\sqrt{\frac{s}{b}}\mu_1(s+a+b-1)\bigg)\|x_{S}\|_2,
\end{align*}
which implies that
$$
K(s,\tau)\geq\frac{\bigg(1-\mu_1(s+a-1)-\tau\sqrt{\frac{s}{b}}\mu_1(s+a+b-1)\bigg)}{\sqrt{(1+\mu_1(s+a-1)}}.
$$
And for the case $a=0$ and $b=1$, we combine (\ref{e5.6}) with (\ref{e5.7}) to get
\begin{align*}
\sqrt{(1+\mu_1(s-1)}\|Ax\|_2&\geq\bigg(1-\mu_1(s-1)-\tau\sqrt{s}\mu_1(s)\bigg)\|x_{S_*}\|_2\\
&\geq\bigg(1-\mu_1(s-1)-\tau\sqrt{s}\mu_1(s)\bigg)\|x_{S}\|_2,
\end{align*}
which implies that
$$
K(s,\tau)\geq\frac{\bigg(1-\mu_1(s-1)-\tau\sqrt{s}\mu_1(s)\bigg)}{\sqrt{(1+\mu_1(s-1)}}.
$$
\end{proof}

%%%%%%%%%%%%%%%%%%%%%%%%%%%%%%%%%%%%%%%%%%%%%%%%%%%%%%%%%%%%%%%%
%%%%%%%%%%%%%%%%%%%%%%%  Section 6 %%%%%%%%%%%%%%%%%%%%%%%%%%%%%%
%%%%%%%%%%%%%%%%%%%%%%%%%%%%%%%%%%%%%%%%%%%%%%%%%%%%%%%%%%%%%%%%
\section{Numerical Experiments \label{s6}}
\hskip\parindent

In this section, we present several numerical experiments for three different measurement matrices to support our theory. We use IRucLq-v algorithm ($q=1$) in \cite{LXY2013} to compute unconstraint problem Lasso (\ref{LassoModel}).

\subsection{Three Examples of Measurement Matrices \label{s6.1}}
\hskip\parindent

In this subsection, we will exhibit several different measurement matrices, which also are called dictionaries, and compute their coherence $\mu$ and cumulative coherence $\mu_1(s)$. And in next subsection, we will use these measurement matrices in our numerical experiments.

\begin{example}\label{DecayingAtoms}
Fix a parameter $0<\beta <1$. For each index $i\geq 0$, define an atom by
\begin{align*}
A_{i}(j)=
\begin{cases}
0, & 0\leq j<i\\
\beta^{j-i}\sqrt{1-\beta^2}, & i\leq j\leq n-1.
\end{cases}
\end{align*}
It can be shown that the atoms $\{A_i\}$ span $\ell_2(\mathbb{R}^m)$, so they form a dictionary. $A=[A_1,\dots,A_n]$ is called decaying matrix or decaying dictionary in \cite{T2004-1,T2004-2}. Each $A_i$ has $\ell_2$ unit norm. It also follows that the coherence of the measurement matrix equals $\beta$. However, the cumulative coherence function $\mu_1(s)<2\beta/(1-\beta)$
for all $s$. On the other hand, the quantity $s\mu$ grows without bound.

Set $\beta=1/(4\sqrt{3}+1)$, then the  MIP condition (\ref{SpecialLassocondition-3}) in Remark \ref{Lasso} requires that $s<2$. On the other hand, $\mu_1(4s-1)<1/(2\sqrt{3})$ for every $s$. Therefore, cumulative condition (\ref{SpecialLassocondition-2}) in Remark \ref{Lasso} shows that Lasso also can recover any (finite) linear combination of decaying atoms.
\end{example}

\begin{example}\label{DiracHadamard}
Let $A=[I_m, \bar{H}_m]$, concatenating two orthonormal bases-the standard and Hadamard bases for signals of length $m$, which is called Dirac-Hadamard matrix or Dirac-Hadamard dictionary \cite{DET2006}. Hadamard matrix $H_m$ is a square matrix whose entries are either $+1$ or $-1$ and whose rows are mutually orthogonal. And we use the normalized Hadamard matrix $\bar{H}_m=H_m/{\sqrt{m}}$. As showed in \cite{DET2006}, the coherence $\mu=1/{\sqrt{m}}$. Note that
\begin{align*}
|\langle A_i, A_k\rangle|=
\begin{cases}
|\langle I_m(i), I_m(k)\rangle|=0,&i,k\in \{1,\ldots,m\}\\
|\langle I_m(i), \bar{H}_m(k)\rangle|=\frac{1}{\sqrt{m}},&i\in \{1,\ldots,m\}, j\in\{m+1,\ldots,2m\}\\
|\langle \bar{H}_m(i), \bar{H}_m(k)\rangle|=0,&i,k\in \{m+1,\ldots,2m\}.
\end{cases}
\end{align*}
Therefore $\mu_1(s)=s/{\sqrt{m}}$.
\end{example}

\begin{example}\label{DiracFourier}(\cite{T2004-1,HGT2006})
Consider the dictionary for $\mathbb{C}^m$ that has synthesis matrix $A=[I_m, \mathcal{F}_m]$,
where $\mathcal{F}_m$ is the $m$-dimensional discrete Fourier transform matrix. For reference, the $(j,k)$ entry of $\mathcal{F}_m$ is the complex number $\exp\{-2\pi jk/m \}$, where $i$ satisfies $i^2=-1$. This dictionary is called the Dirac-Fourier dictionary or Dirac-Fourier because it consists of impulses and discrete complex exponentials.  It is very easy to check that the coherence $\mu$ of the Dirac-Fourier dictionary is $1/{\sqrt{m}}$. And by similar discussion in  Example \ref{DiracHadamard}, we get $\mu_1(s)=s/{\sqrt{m}}$.
\end{example}

\subsection{Numerical Experiments for Three Matrices \label{s6.2}}
\hskip\parindent

In this subsection we will use IRucLq-v algorithm ($q=1$) \cite{LXY2013} to solve unconstraint problem (\ref{LassoModel}) to support our theory. We consider both noiseless and noisy cases. In
this test,  the true vector $x^0$ had $s$ nonzeros with each one entry generated according to the standard Gaussian distribution and $s$ varying among $\{2, 4, 6, \ldots, 32\}$. The location of nonzeros was uniformly randomly generated. We take $A$ as three matrices-decaying matrix, Dirac-Hadamard matrix and Dirac-Fourier matrix, which has the size of $64\times 128$.
And the measurement vector $b$ was observed from $b= Ax^0+z$, where $z$ was zero-mean Gaussian noise with standard deviation $\sigma$ or zero vector. The parameter $\lambda$ was set to $10^{-6}$. We let the algorithm run to 500 iterations. The
recovery was regarded as successful if $\frac{\|x^r-x^0\|_2}{\|x^0\|_2}\leq 10^{-3}$, where $x^r$ stands for a recovered
vector.

In the noiseless case, we compare three measurement matrices in terms of success percentage. We run 50 independent realizations and record the corresponding success rates at various sparsity levels $s$. The left picture in Figure \ref{4.1}	shows these results.
From the figure, we can see that sparse signal can be exact recovered by three matrices-decaying matrix, Dirac-Hadamard matrix and Dirac-Fourier matrix, which satisfies our cumulative coherence condition in Remark \ref{Lasso}. And in three measurement matrices, Dirac-Fourier matrix gives the highest successful rate.

In the presence of noise, we take $\sigma=0.01$ and draw up the average reconstruction signal to noise ratio (SNR) over 50 experiments. The SNR is given by $\text{SNR}(x^r,x^0)=10\log_{10}\frac{\|x^r-x^0\|_2}{\|x^0\|_2}$ where the measure of the SNR is dB.  The right picture in Figure \ref{4.1} shows the SNR of stable recovery using IRucLq-v algorithm over 50 independent trials for various matrices $A$ and sparsity levels $s$. From the figure, we can see that sparse signal can be stable recovered by three matrices-decaying matrix, Dirac-Hadamard matrix and Dirac-Fourier matrix, which satisfies our cumulative coherence condition in Remark \ref{Lasso}. And in three measurement matrices, Dirac-Fourier matrix gives the smallest SNR.

\begin{figure}[H]
\begin{centering}
\subfloat[]
{\begin{centering}
\includegraphics[width=6.0cm]{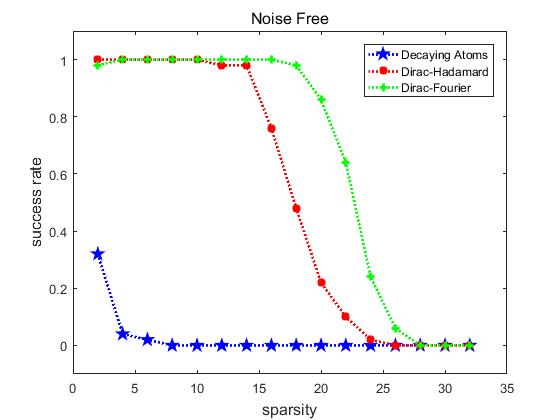}
\par\end{centering}
}
\subfloat[]
{\begin{centering}
\includegraphics[width=6.0cm]{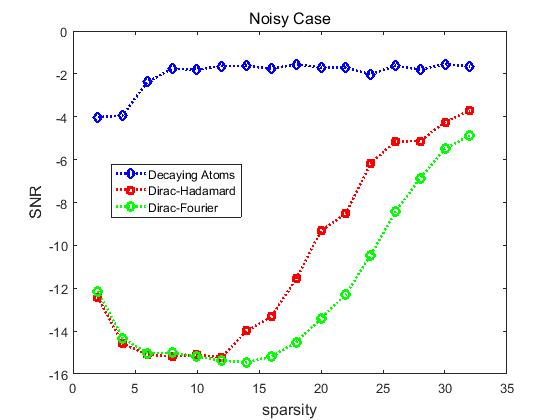}
\par\end{centering}
}
\par\end{centering}
\caption{Left: Success rates using decaying matrix, Dirac-Hadamard matrix and Dirac-Fourier matrix with $m=64, n=
128, s=2,4,6,\ldots,32$. Right: SNR using decaying matrix, Dirac-Hadamard matrix and Dirac-Fourier matrix with $m=64, n=
128, s=2,4,6,\ldots,32$.}
\label{4.1}	
\end{figure}

%%%%%%%%%%%%%%%%%%%%%%%%%%%%%%%%%%%%%%%%%%%%%%%%%%%%%%%%%%%%%%%%%%%%%%%%%%%%%%%%%%%%%%%%%%%%%%%%%%%%
%%%%%%%%%%%%%%%%%%%%%%%%%%%%%%%%%%%%%%% Section 7 %%%%%%%%%%%%%%%%%%%%%%%%%%%%%%%%%%%%%%%%%%%%%
%%%%%%%%%%%%%%%%%%%%%%%%%%%%%%%%%%%%%%%%%%%%%%%%%%%%%%%%%%%%%%%%%%%%%%%%%%%%%%%%%%%%%%%%%%%%%%%%%%%%
\section{Conclusions and Discussion \label{s7}}
\hskip\parindent

In this paper, we first show that the condition $\mu_1(s-1)+\mu_1(2s-1)<1$ can guarantee the stable recovery of the original signal $x$ via the QCBP model (\ref{QCBPmodel}) and Dantzig selector model (\ref{DantzigselectorMedel}) (Theorem \ref{QCBP-DS}). And we also show that the condition $\mu_1(s-1)+\mu_1(4s-1)<1/\sqrt{3}$ is sufficient to guarantee the stable recovery of the original signal $x$ via Lasso model (\ref{LassoModel}). Because Lasso estimator and Dantzig selector exhibit similar behavior, we also prove that the prediction loss $\|A\hat{x}^{DS}-Ax\|_2^2$ and $\|A\hat{x}^{L}-Ax\|_2^2$
is close when the number of nonzero components of the Lasso or Dantzig selector is small as compared to the same size (Theorem \ref{Closenessofpredictionloss}). For Dantzig selector model, we provide an oracle inequality for sparse signal under the condition $\mu_1(s-1)+\mu_1(2s-1)<1$ (Theorem \ref{OracleinequalityDS}).

And in the section \ref{s5}, we investigate the relationship between cumulative coherence and RE-condition, we find that the RE-condition of order $s$ and $\tau$ can be deduced from the condition $\mu_1(s-1)+\tau\sqrt{s}\mu_1(s)<1$ (Theorem \ref{CC-REcondition}). In the last section, we present several numerical experiments through IRucLq-v algorithm ($q=1$) for three measurement matrices to support our cumulative coherence theory proposed in this paper.

However, Tropp \cite{T2004-2} showed that the condition $\mu_1(s-1)+\mu_1(s)<1$ is sufficient to guarantee the exact recovery of all $s$-sparse signal. Therefore, our condition $\mu_1(s-1)+\mu_1(2s-1)<1$ for QCBP model and Dantzig selector model (Theorem \ref{QCBP-DS}) may be improved further.

%%%%%%%%%%%%%%%%%%%%%%%%%%%%%%%%%%%%%%%%%%%%%%%%%%%%%%%%%%%%%%%
%%%%%%%%%%%%%%%%%%%%%%%%%%%%%%%%%%%%%%%%%%%%%%%%%%%%%%%%%%%%%%
%%%%%%%%%%%%%%%%%%%%%%%%%%%%%%%%%%%%%%%%%%%%%%%%%%%%%%%%%%%%%%

\textbf{Acknowledgement}: Wengu Chen is supported by National Natural Science Foundation of China (No. 11371183).

%%%%%%%%%%%%%%%%%%%%%%%%%%%%%%%%%%%%%%%%%%%%%%%%%%%%%%%%%%%%%%%%%%%%%%%%%%%%%%%%%%%%%%%%%%%

%%%%%%%%%%%%%%%%%%%%%%%%%%%%%%%%%%%   Bibliography  %%%%%%%%%%%%%%%%%%%%%%%%%%%%%%%%%%%%%%%%

%%%%%%%%%%%%%%%%%%%%%%%%%%%%%%%%%%%%%%%%%%%%%%%%%%%%%%%%%%%%%%%%%%%%%%%%%%%%%%%%%%%%%%%%%%%%

\end{document}